\pgfplotsset{compat=1.5}
\theoremstyle{definition}% default
\newtheorem{theorem}{Theorem}
\newtheorem{proposition}{Proposition}
\newtheorem{example}{Example}
  \title{On Lossless Causal Compression \\ of Periodic Signals}
\author{%\hspace{-47.5pt}
Jan Maximilian Montenbruck, %\hspace{-95pt}
Shen Zeng% <-this % stops a space
%\thanks{The material in this paper has not been presented anywhere.}% <-this % stops a space
\thanks{
Shen Zeng is with the School of Engineering \& Applied Science, Washington University in St.\ Louis, St.\ Louis, MO, USA.} 
%
%For correspondence, \tt\fontsize{7.78pt}{1em}\selectfont mailto:jan-maximilian.montenbruck@ist.uni-stuttgart.de}
}% <-this % stops a space
\begin{document}
\maketitle
\thispagestyle{empty}
\pagestyle{empty}
\begin{abstract}
We present and study a scheme for lossless causal compression of periodic real-valued signals. In particular, our technique compresses a vector-valued signal to a scalar-valued signal by mixing it with another periodic signal. The conditions for being able to reconstruct the original signal then amount to certain non-resonances between the periods of the two signals. The proposed compression scheme turns out to implicitly be inherent to communication networks with round-robin scheduling and digital photography with active pixel sensors.
\end{abstract}

\section{Introduction}

With the increasing desire for ubiquitous availability of data, particularly digital media, the need for transmission of large amounts of data is nowadays increasing as well. This causes bandwidth requirements to continuously grow in all domains and thus necessitates signal compression techniques. These techniques have the purpose of reducing the amount of bits required to represent a signal. 

In signal compression, one distinguishes between techniques which are reversible, i.e., for which the original signal can be uniquely reconstructed, and, on the other hand, irreversible schemes. One refers to the former, reversible compression schemes as being \emph{lossless}. Lossless compression schemes are preferred whenever the quality of the reconstructed signal is important or when the reconstructed signal should be manipulated for further usage. For words drawn from finite alphabets, the so-called \emph{entropy coding} techniques, such as Shannon-Fano coding, Huffman coding, arithmetic coding, or dictionary coding, are well-established (cf. \cite[section 9]{Gersho1992}). For real-valued signals, particularly periodic signals, it is common to compute the Fourier coefficients of the signal in order to then transmit those coefficients, from which the signal can be reconstructed. This scheme is, however, not causal, i.e., computation of the Fourier coefficients requires availability of the entire signal. This implies that the technique cannot be applied \emph{online}, i.e., during signal transmission. 

Herein, we will present and discuss a scheme for lossless causal compression of periodic signals. Signal compression will thereby be realized by mixing the signal with another periodic signal. We will reveal that this concept is also inherent to technical applications such as communication networks with round-robin scheduling or digital photography with active pixel sensors. Since we will explicitly discuss reconstruction techniques for the compressed signals, our results will thus also allow for signal reconstruction in those applications.

\newpage

Throughout the paper, we consider periodic vector-valued signals $x$ which must be compressed to scalar-valued signals $y$ in a lossless fashion, i.e. such that it will thereafter be possible to uniquely reconstruct $x$ from $y$. This compression will be realized instantaneously, viz.\ by mixing the channels of $x$ with the channels of another vector-valued periodic signal $c$. By \emph{mixing} the channels of $x$ and $c$, we mean multiplying each scalar-valued $i$th signal $x_{i}$ in $x$ with the corresponding scalar-valued $i$th signal $c_{i}$ in $c$. The resulting scalar-valued signals are summed up, thereby yielding the compressed signal $y$. It shall be emphasized that this compression is \emph{causal}, i.e., values of $y$ up to a certain time cannot depend on values of $x$ or $c$ which only occur after that time.

More formally, let the signals $x$ and $c$ both evolve in $\mathbb{R}^{n}$. Then, at time $t$, the compressed signal $y$ results from taking the inner product $\langle c\left(t\right),x\left(t\right)\rangle$ of $c\left(t\right)$ with $x\left(t\right)$, i.e.
\begin{equation}
 y\left(t\right)=\langle c\left(t\right),x\left(t\right)\rangle. \label{eq:compression}
\end{equation}
Our interest in this compression stems not only from its reversibility and causality, but, in particular, also from its implicit occurrence in technical applications. In the following, we outline how \eqref{eq:compression} is, for instance, inherent to communication networks with round-robin scheduling and digital photography with active pixel sensors.

In communication networks with a shared communication medium, round-robin scheduling, cf.\ \cite{Katevenis1987}, is a starvation-free scheduling scheme in which every participant sends a message after its predecessor has sent its message. This process continues in a fixed order, without allowing one participant to send twice, until every participant has sent a message. Thereafter, the participant which has sent the first message is allowed to send a message again, then repeating the entire schedule in a cyclic fashion. By letting $x_{i}$ denote the messages of the $i$th participant, with $n$ overall participants in the network, round-robin scheduling is recovered from \eqref{eq:compression} by letting $c$ be the $n$-periodic sequence of vectors 
\begin{equation}
e_{1},\;e_{2},\;e_{3},\;\dots,\;e_{n},\;e_{1},\;e_{2},\;\dots \label{eq:standardbasis}
\end{equation}
with $e_{i}$ denoting the $i$th vector of the standard basis of $\mathbb{R}^{n}$, i.e. the vector whose $i$th entry is $1$ but all whose other entries are $0$. In particular, choosing $x$ and $c$ as above, $y$ will attain the message $x_{1}\left(0\right)$ of participant $1$ at time $0$, the message $x_{2}\left(1\right)$ of participant $2$ at time $1$, and so forth, until attaining the message $x_{n}\left(n-1\right)$ of participant $n$ at time $n-1$, thereafter returning to $x_{1}\left(n\right)$ again and repeating the process in a cyclic fashion. In this notation, the unsent messages, such as $x_{1}\left(1\right)$ or $x_{1}\left(2\right)$, are undisclosed to receivers.

\newpage

In digital photography, active pixel sensors, cf.\ \cite{Fossum1997}, have become a popular alternative to charge-coupled devices for their advantages in bloom, power consumption, lag, manufacturing, on-board image processing, scalability, and cost. Yet, these sensors are not capable of reading all pixel data simultaneously. Instead, pixels are read out line-by-line, which has been termed the rolling shutter effect. Letting $x_{i}$ denote the color intensities of the $i$th line of pixels, with $n$ overall lines of pixels, the rolling shutter effect is recovered from \eqref{eq:compression} by letting $c$ be the $n$-periodic sequence of vectors \eqref{eq:standardbasis}. Choosing $x$ and $c$ in such a fashion, \eqref{eq:compression} will yield the pixels $x_{1}\left(0\right)$ of line $1$ at time $0$, the pixels $x_{2}\left(1\right)$ of line $2$ at time $1$, and so forth, until obtaining the pixels of the $n$th line, $x_{n}\left(n-1\right)$ at time $n-1$, thereafter reading out the first line $x_{1}\left(n\right)$ again and repeating this process in a cyclic fashion.

In the present paper, we study conditions under which, in \eqref{eq:compression}, $x$ can be reconstructed from $y$. Thereby, the relation between $x$ and $y$ which is established will allow to explicitly compute $x$ from $y$. In the above applications, this would allow us to either reconstruct missed messages omitted due to round-robin scheduling or reconstruct images which were distorted due to rolling shutter. Our approach is based upon certain non-resonance conditions between the periodicities of $x$ and $c$. To our best knowledge, this approach is novel. However, other compression schemes bare similarities with \eqref{eq:compression}, which becomes particularly evident when depicting \eqref{eq:compression} as it is done in Fig.\ \ref{fig:genc}:

\begin{figure}
\centering
  \begin{tikzpicture}
\draw (3.5,-2) -- (5.5,-2) node[anchor=west,yshift=-1] {$y$};
 \draw (3.5,-2) -- (1,0) -- (-2,0) node[anchor=east] {$x_{1}$};
 \draw (3.5,-2) -- (1,-1) -- (-2,-1) node[anchor=east] {$x_{2}$};
 \draw (3.5,-2) -- (1,-2) -- (-2,-2) node[anchor=east] {$x_{3}$};
 \draw (3.5,-2) -- 	(1,-4) -- (-2,-4) node[anchor=east] {$x_{n}$};
    \draw (0,0) -- (0,.5) node[anchor=west] {$c_{1}$};
    \draw (0,-1) -- (0,-.5) node[anchor=west] {$c_{2}$};
    \draw (0,-2) -- (0,-1.5) node[anchor=west] {$c_{3}$};
    \draw (0,-4) -- (0,-3.5) node[anchor=west] {$c_{n}$};
  \draw[fill=white] (0,0) circle (.11) node {$\times$};
  \draw[fill=white] (0,-1) circle (.11) node {$\times$};
  \draw[fill=white] (0,-2) circle (.11) node {$\times$};
  \draw (0,-2.7) -- (0,-2.7) node{$\vdots$};
  \draw[fill=white] (0,-4) circle (.11) node {$\times$};
  \draw[fill=white] (3.5,-2) circle (.11) node {$+$};
\end{tikzpicture}
  \caption{The vector valued signal $x$ is compressed to a scalar-valued signal $y$ by taking inner products with the vector-valued mixing signal $c$.}\label{fig:genc}
\end{figure}
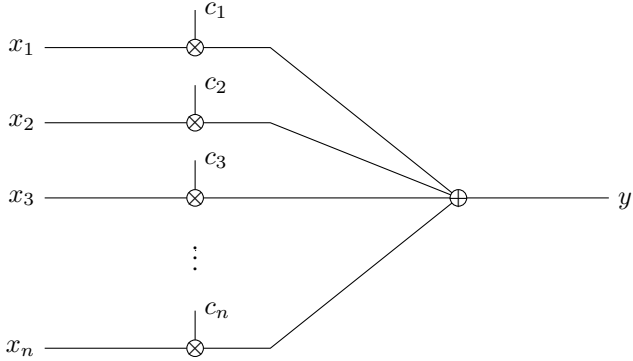

In quantization schemes for vector-valued signals, signals are either classified by distinct binary-valued classifiers, whose outputs are then summed up in order to produce a compressed output, cf.\ \cite[Figure 10.5]{Gersho1992}, or encoded by distinct codebooks, which are switched through, say periodically, in order to produce a compressed output, cf.\ \cite[Figure 14.3]{Gersho1992}.

In compression schemes for sparse signals, families of random vectors are sought such that the family of inner products of those vectors and the sparse signal allows to reconstruct the sparse signal, whereby the family of inner products can be viewed as compressed output, cf.\ \cite{Candes2006}.

In modulated wideband converters, a scalar-valued signal is multiplied with distinct mixing signals, say periodically, in order to produce distinct outputs, which are then sampled in order to produce compressed outputs, cf. \cite{Mishali2010}.

\section{Lossless Causal Compression \\ of Periodic Signals}

\begin{figure}
 \centering
 \begin{tikzpicture}
 \draw[white] (0,.525) -- (0,.525) node[anchor=west] {\phantom{$c_{1}$}};
 \draw (0,-1) -- (2.5,-2);
  \draw[dotted] (0,-2) -- (2.5,-2);	
  \draw[dotted] (0,0) -- (2.5,-2) node[pos=.5,anchor=south west] {$n$-periodic};	
  \draw[dotted] (0,-4) -- (2.5,-2);	
 \draw (0,0) -- (-2,0) node[anchor=east] {$x_{1}$};
 \draw (0,-1) -- (-2,-1) node[anchor=east] {$x_{2}$};
 \draw (0,-2) -- (-2,-2) node[anchor=east] {$x_{3}$};
 \draw (0,-4) -- (-2,-4) node[anchor=east] {$x_{n}$};
  \draw[fill=white] (0,0) circle (.1);
  \draw[fill=white] (0,-1) circle (.1);
  \draw[fill=white] (0,-2) circle (.1);
  \draw (0,-2.925) -- (0,-2.925) node{$\vdots$};
  \draw[fill=white] (0,-4) circle (.1);
  \draw (2.5,-2) -- (4.5,-2) node[anchor=west,yshift=-1] {$y$};
  \draw[fill=white] (2.5,-2) circle (.1);
%  \draw[->] ([shift=(140:.5)]2.5,-2) arc (140:219:.5);
   \draw[->] ([shift=(159:.8075)]2.5,-2) arc (159:180:.8075);
 \end{tikzpicture}
\caption{Letting $c$ be the periodic sequence of standard basis vectors \eqref{eq:standardbasis}, our compression scheme becomes an $n$-periodic switch.}\label{fig:percompress}
\end{figure}
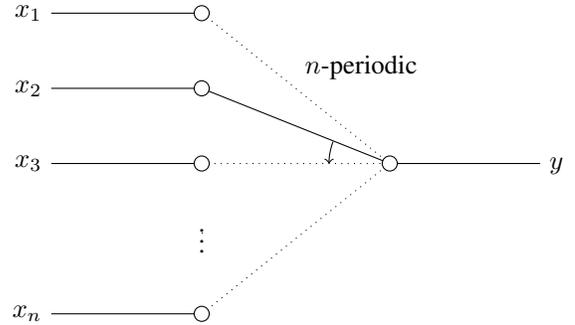

On the examples of communication networks with round-robin scheduling and digital photography with active pixel sensors, respectively, we outlined how the compression scheme \eqref{eq:compression}, with $c$ being the $n$-periodic sequence of standard basis vectors \eqref{eq:standardbasis}, implicitly arises in applications. This scheme lets $y$ switch through the signals $x_{i}$ periodically as if it was an $n$-periodic switch, illustrated in Fig.\ \ref{fig:percompress}. Motivated by these examples, we first focus on conditions for this scheme to be lossless, i.e., conditions under which $x$ can be uniquely reconstructed from $y$. To this end, let $x:\mathbb{N}_{0}\to\mathbb{R}^{n}$ be a $p$-periodic sequence, i.e.
\begin{equation}
 x\left(t+p\right)=x\left(t\right).
\end{equation}
We note that the sequence \eqref{eq:standardbasis} is $n$-periodic. Now $y$ attains the values $x_{1}\left(0\right)$, $x_{2}\left(1\right)$, and so forth, until attaining $x_{n}\left(n-1\right)$, thereafter returning to $x_{1}\left(n\right)$ again and repeating this process in a cyclic fashion. But as $x$ is $p$-periodic, the values of $x$ which were missed during the first cycle, e.g. $x_{1}\left(1\right)$, will occur again at the times $1+ip$, $i\in\mathbb{N}$. Thus, the question arises whether the periodic switch $c$ will let the signal $x_{1}$ pass at one of the times $1+ip$, $i\in\mathbb{N}$, i.e., whether there is some $j\in\mathbb{N}$ such that $1+ip=jn+1$. For knowing $x_{1}\left(2\right)$, we must thus ask for solvability of $2+ip=jn+1$ and so forth, and thus, in general, ask whether
\begin{equation}
\mathbb{N}\times\mathbb{N}\to\mathbb{Z},\;\;\;\left(i,j\right)\mapsto ip-jn  
\end{equation}
is onto. Rewriting $y\left(t\right)$ as $x_{\left(t\operatorname{mod}n\right)+1}\left(t\operatorname{mod}p\right)$, this question can be restated as follows: is it, for any $i$ and $j$, possible to find a time $t$ such that $t\operatorname{mod}n$ is $i$ and $t\operatorname{mod}p$ is $j$? In modular arithmetic, one poses this question by asking whether, for all $i$ and $j$, the simultaneous congruences
\begin{align}
 t&\equiv i\;\;\;\hspace{.75pt}\left(\operatorname{mod}n\right) \label{eq:cong1}\\
 t&\equiv j\;\;\;\left(\operatorname{mod}p\right) \label{eq:cong2}
\end{align}
can be solved for $t$. By the Chinese remainder theorem, the answer is in the affirmative when the moduli $p$ and $n$ are coprime. In other words, should $p$ and $n$ be coprime, then the compression \eqref{eq:compression}, with $c$ chosen as \eqref{eq:standardbasis}, is lossless. This is formalized in the following theorem.

\begin{theorem}\label{thm:main}
Let $c$ be the $n$-periodic sequence of standard basis vectors \eqref{eq:standardbasis} and let $x$ be $p$-periodic. If $p$ and $n$ are coprime, then $x$ can be uniquely reconstructed from $y$.
\end{theorem}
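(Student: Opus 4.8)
The plan is to show that the map $y$ reveals every value $x_i(s)$ for all channels $i\in\{1,\dots,n\}$ and all phases $s\in\{0,\dots,p-1\}$, which is exactly what is needed to reconstruct the $p$-periodic signal $x$. The starting observation, already distilled in the excerpt, is that with $c$ equal to the $n$-periodic standard-basis sequence \eqref{eq:standardbasis}, the compressed signal satisfies $y(t)=x_{(t\bmod n)+1}(t\bmod p)$. Hence at each time $t$ the receiver learns the value of channel $(t\bmod n)+1$ at phase $t\bmod p$. Reconstructing $x$ therefore amounts to recovering, for each desired pair (channel index $i$, phase $s$), some time $t$ at which $y$ discloses $x_{i+1}(s)$; equivalently, we must realize every residue pair via a single $t$.

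First I would fix an arbitrary pair $(i,j)$ with $i\in\{0,\dots,n-1\}$ and $j\in\{0,\dots,p-1\}$ and invoke the coprimality of $n$ and $p$. Because $\gcd(n,p)=1$, the Chinese remainder theorem guarantees the existence of a $t\in\mathbb{N}_0$ solving the simultaneous congruences \eqref{eq:cong1} and \eqref{eq:cong2}, that is $t\equiv i\ (\operatorname{mod} n)$ and $t\equiv j\ (\operatorname{mod} p)$. For such a $t$ we have $(t\bmod n)+1=i+1$ and $t\bmod p=j$, so $y(t)=x_{i+1}(j)$, meaning this particular value of $x$ is disclosed by $y$. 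Moreover the solution $t$ may be taken in $\{0,\dots,np-1\}$, so only finitely many samples of $y$ are consulted.

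Next I would let the pair $(i,j)$ range over the full grid $\{0,\dots,n-1\}\times\{0,\dots,p-1\}$. Since the map sending a residue class modulo $np$ to its pair of residues modulo $n$ and modulo $p$ is a bijection (the isomorphism $\mathbb{Z}/np\mathbb{Z}\cong\mathbb{Z}/n\mathbb{Z}\times\mathbb{Z}/p\mathbb{Z}$ furnished by the Chinese remainder theorem under coprimality), every value $x_{i+1}(j)$ is disclosed exactly once as $t$ runs through $\{0,\dots,np-1\}$. Using the $p$-periodicity of $x$, knowledge of $x_{i+1}(s)$ for all $s\in\{0,\dots,p-1\}$ determines $x_{i+1}$ at every $t\in\mathbb{N}_0$; doing this for every channel $i+1$ recovers $x$ entirely. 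This establishes that $x$ is uniquely reconstructible from $y$.

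The substance of the argument is entirely the surjectivity (indeed bijectivity) of the residue map, which is precisely the content of the Chinese remainder theorem and is available under the coprimality hypothesis; there is no analytic difficulty and no genuine obstacle once the reformulation $y(t)=x_{(t\bmod n)+1}(t\bmod p)$ is in place. The only point demanding a little care is bookkeeping: verifying that the index shift by one is handled consistently and that the finite window $\{0,\dots,np-1\}$ suffices, so that the reconstruction is both well defined and causal in the sense that it uses only finitely many past samples of $y$.
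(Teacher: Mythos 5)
Your proposal is correct and follows essentially the same route as the paper's proof: rewrite $y\left(t\right)=x_{\left(t\operatorname{mod}n\right)+1}\left(t\operatorname{mod}p\right)$ using the $p$-periodicity of $x$, then invoke the Chinese remainder theorem under coprimality of $n$ and $p$ to conclude that every residue pair, and hence every value of every channel within one period, is attained by $y$. The additional bookkeeping you supply (the explicit finite window $\lbrace 0,\dots,np-1\rbrace$ and the bijectivity of the residue map) is a harmless refinement of the same argument.
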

\begin{proof}
 Substituting the $n$-periodic sequence of standard basis vectors \eqref{eq:standardbasis} for $c$ in \eqref{eq:compression}, we obtain $y\left(t\right)=x_{\left(t\operatorname{mod}n\right)+1}\left(t\right)$. With $x$ being $p$-periodic, this equals $x_{\left(t\operatorname{mod}n\right)+1}\left(t\operatorname{mod}p\right)$, as we noted above. If, for all $i$ and $j$, the simultaneous congruences \eqref{eq:cong1}-\eqref{eq:cong2} can be solved for $t$, then $y$ has attained every value of every $x_{i}$ at least once, and thus $x$ can be reconstructed from $y$. By the Chinese remainder theorem, the claim is proven.
\end{proof}

With this theorem at hand, we are able to determine whether the compression \eqref{eq:compression}, with $c$ chosen as \eqref{eq:standardbasis}, is lossless, only based upon the periods $p$ and $n$. A possible advantage of this characterization is that it allows us to verify whether our compression is lossless even when $p$ is not known explicitly. In particular, for fixed $n$, all $p$ such that the greatest common divisor $\operatorname{gcd}\left(n,p\right)$ of $n$ and $p$ is $1$ allow to reconstruct $x$ from $y$. These $p$ can be computed explicitly, for instance via Euler's totient function.

The above characterization of lossless causal compressors has a quite geometric interpretation in terms of the function
\begin{equation}
 \mathbb{N}_{0}\to\mathbb{Z}/n\mathbb{Z}\times\mathbb{Z}/p\mathbb{Z},\;\;\;t\mapsto\left(t\operatorname{mod}n,t\operatorname{mod}p\right),
\end{equation}
whereby $\mathbb{Z}/n\mathbb{Z}$ are the integers modulo $n$ and thus $\mathbb{Z}/n\mathbb{Z}\times\mathbb{Z}/p\mathbb{Z}$ is a discrete torus. In particular, one asks whether this function is a winding of the discrete torus, i.e. whether it is surjective. With this point of view, coprimeness of $n$ and $p$ is interpreted as those two periods not being in resonance.  

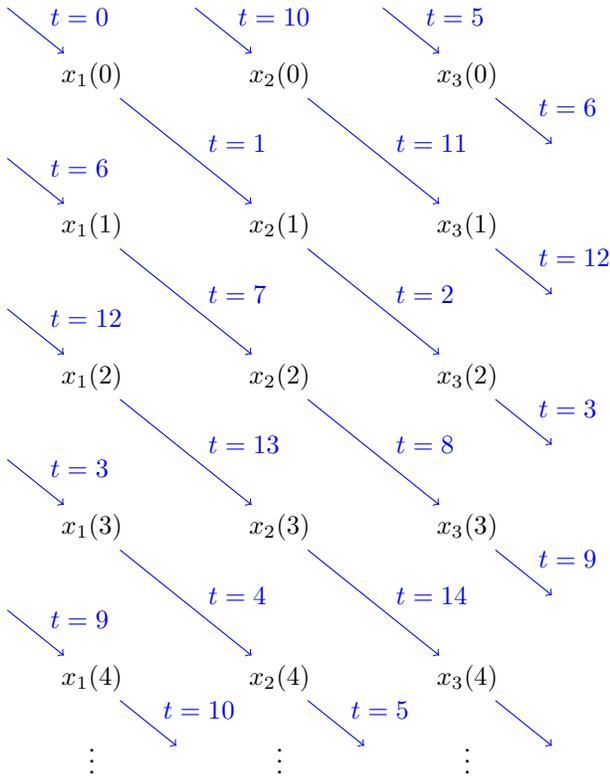
\begin{figure}[h]
\centering
%\vspace{0.2cm}
%\begin{center}
 \begin{tikzpicture}
\def\dx{71pt}
\def\dy{57pt}
  \node (1) at (0*\dx,6*\dy) {$x_1(0)$};		\node (11) at (1*\dx,6*\dy) {$x_2(0)$}; 		\node (6) at (2*\dx,6*\dy) {$x_3(0)$};
\node (7) at (0*\dx,5*\dy) {$x_1(1)$};			\node (2) at (1*\dx,5*\dy) {$x_2(1)$}; 		\node (12) at (2*\dx,5*\dy) {$x_3(1)$};
\node (13) at (0*\dx,4*\dy) {$x_1(2)$};		\node (8) at (1*\dx,4*\dy) {$x_2(2)$}; 		\node (3) at (2*\dx,4*\dy) {$x_3(2)$};	
\node (4) at (0*\dx,3*\dy) {$x_1(3)$};			\node (14) at (1*\dx,3*\dy) {$x_2(3)$}; 		\node (9) at (2*\dx,3*\dy) {$x_3(3)$};		\node (31) at (2.5*\dx,3.5*\dy) {};
\node (10) at (0*\dx,2*\dy) {$x_1(4)$};		\node (5) at (1*\dx,2*\dy) {$x_2(4)$}; 		\node (15) at (2*\dx,2*\dy) {$x_3(4)$};
\node (19) at (0*\dx,1.5*\dy) {$\vdots$};		\node (20) at (1*\dx,1.5*\dy) {$\vdots$}; 		\node (21) at (2*\dx,1.5*\dy) {$\vdots$};

\node (41) at (1.5*\dx,1.5*\dy) {};
\node (51) at (1.5*\dx,6.5*\dy) {};
\node (61) at (2.5*\dx,5.5*\dy) {};
\node (71) at (-0.5*\dx,5.5*\dy) {};
\node (81) at (2.5*\dx,2.5*\dy) {};
\node (82) at (-0.5*\dx,2.5*\dy) {};
\node (83) at (0.5*\dx,1.5*\dy) {};
\node (84) at (0.5*\dx,6.5*\dy) {};
\node (85) at (2.5*\dx,4.5*\dy) {};
\node (86) at (-0.5*\dx,4.5*\dy) {};
\node (87) at (-0.5*\dx,3.5*\dy) {};
\node (88) at (-0.5*\dx,6.5*\dy) {};
\node (89) at (2.5*\dx,1.5*\dy) {};
 \draw [->,blue] (88) -- (1) node [pos=.6,anchor=south west] {$t=0$};
 \draw [->,blue] (1) -- (2) node [pos=.6,anchor=south west] {$t=1$};
 \draw [->,blue] (2) -- (3) node [pos=.6,anchor=south west] {$t=2$};
 \draw [->,blue] (3) -- (31) node [pos=.6,anchor=south west] {$t=3$};
\draw [->,blue] (87) -- (4) node [pos=.6,anchor=south west] {$t=3$};
 \draw [->,blue] (4) -- (5) node [pos=.6,anchor=south west] {$t=4$};
 \draw [->,blue] (5) -- (41) node [pos=.6,anchor=south west] {$t=5$};
 \draw [->,blue] (51) -- (6) node [pos=.6,anchor=south west] {$t=5$};
 \draw [->,blue] (6) -- (61) node [pos=.6,anchor=south west] {$t=6$};
 \draw [->,blue] (71) -- (7) node [pos=.6,anchor=south west] {$t=6$};
 \draw [->,blue] (7) -- (8) node [pos=.6,anchor=south west] {$t=7$};
 \draw [->,blue] (8) -- (9) node [pos=.6,anchor=south west] {$t=8$};
 \draw [->,blue] (9) -- (81) node [pos=.6,anchor=south west] {$t=9$};
 \draw [->,blue] (82) -- (10) node [pos=.6,anchor=south west] {$t=9$};
 \draw [->,blue] (10) -- (83) node [pos=.6,anchor=south west] {$t=10$};
\draw [->,blue] (84) -- (11) node [pos=.6,anchor=south west] {$t=10$};
\draw [->,blue] (11) -- (12) node [pos=.6,anchor=south west] {$t=11$};
\draw [->,blue] (12) -- (85) node [pos=.6,anchor=south west] {$t=12$};
\draw [->,blue] (86) -- (13) node [pos=.6,anchor=south west] {$t=12$};
\draw [->,blue] (13) -- (14) node [pos=.6,anchor=south west] {$t=13$};
\draw [->,blue] (14) -- (15) node [pos=.6,anchor=south west] {$t=14$};
\draw [->,blue] (15) -- (89);
    \end{tikzpicture}
%\end{center}
    \caption{The graph of the function $t\mapsto\left(t\operatorname{mod}3,t\operatorname{mod}5\right)$ depicted as a winding of the discrete torus $\mathbb{Z}/3\mathbb{Z}\times\mathbb{Z}/5\mathbb{Z}$, thus attaining all its values.}
    \label{fig:discrete_torus}
  \end{figure}

  \newpage

  \begin{example}
   We consider $n=3$ signals $x_{1}$, $x_{2}$, $x_{3}$, which we assume to be $p=5$-periodic. When asking whether $x$ can be reconstructed from the compression scheme depicted in Fig.\ \ref{fig:percompress}, i.e., from the values 
   \begin{equation}
x_{1}\left(0\right),\;\;\;x_{2}\left(1\right),\;\;\;x_{3}\left(2\right),\;\;\;x_{1}\left(3\right),\;\;\;\dots,
   \end{equation}   
    we can employ Theorem \ref{thm:main} and compute the greatest common divisor of $3$ and $5$, which is $1$, thus revealing that $3$ and $5$ are coprime and that hence $x$ can be reconstructed from $y$. The geometric interpretation of this consequence is illustrated in Fig.\ \ref{fig:discrete_torus}. Therein, we see how the function $t\mapsto\left(t\operatorname{mod}3,t\operatorname{mod}5\right)$ is onto the discrete torus $\mathbb{Z}/3\mathbb{Z}\times\mathbb{Z}/5\mathbb{Z}$. For instance, $y\left(5\right)=x_{3}\left(5\right)=x_{3}\left(0\right)$, thus revealing the value of $x_{3}$ at time $0$, which was undisclosed to us during the first period. Proceeding, $y$ attains all values of of every $x_{i}$ at least once. Thus, here, our compression is lossless. On the other hand, if we consider the $n=2$ signals $x_{1}$, $x_{2}$ and assume them to be $p=4$-periodic, then $y$ will attain $x_{1}\left(0\right)$, $x_{2}\left(1\right)$, $x_{1}\left(2\right)$, and $x_{2}\left(3\right)$, thereafter returning to $x_{1}\left(4\right)=x_{1}\left(0\right)$ again and thus not generating new values of $x$ after the first period. For instance, $x_{1}\left(1\right)$ or $x_{1}\left(3\right)$ will never affect $y$. Here, the graph of $t\mapsto\left(t\operatorname{mod}2,t\operatorname{mod}4\right)$ will not cover the discrete torus $\mathbb{Z}/2\mathbb{Z}\times\mathbb{Z}/4\mathbb{Z}$, as depicted in Fig.\ \ref{fig:nowinding}, and thus, it will, in contrast to the previous setting, now not be possible to reconstruct $x$ from $y$. 
  \end{example}

  \begin{figure}[t]
  \centering
%\vspace{0.2cm}
%\begin{center}
 \begin{tikzpicture}
\def\dx{100pt}
\def\dy{40pt}
  \node (10) at (0*\dx,6*\dy) {$x_1(0)$};		\node (20) at (1*\dx,6*\dy) {$x_2(0)$}; 
\node (11) at (0*\dx,5*\dy) {$x_1(1)$};			\node (21) at (1*\dx,5*\dy) {$x_2(1)$}; 
\node (12) at (0*\dx,4*\dy) {$x_1(2)$};		\node (22) at (1*\dx,4*\dy) {$x_2(2)$};
\node (13) at (0*\dx,3*\dy) {$x_1(3)$};			\node (23) at (1*\dx,3*\dy) {$x_2(3)$}; 	
\node (14) at (0*\dx,2.25*\dy) {$\vdots$};		\node (24) at (1*\dx,2.25*\dy) {$\vdots$}; 	
\node (00) at (-0.5*\dx,6.5*\dy) {};
\node (02) at (-0.5*\dx,4.5*\dy) {};
\node (31) at (1.5*\dx,4.5*\dy) {};
\node (32) at (1.5*\dx,2.5*\dy) {};
 \draw [->,blue] (00) -- (10) node [pos=.5,anchor=south west] {$t=4$};
  \draw [->,blue] (10) -- (21) node [pos=.5,anchor=south west] {$t=1$};
  \draw [->,blue] (02) -- (12) node [pos=.5,anchor=south west] {$t=2$};
  \draw [->,blue] (21) -- (31) node [pos=.5,anchor=south west] {$t=2$};
  \draw [->,blue] (12) -- (23) node [pos=.5,anchor=south west] {$t=3$};  
  \draw [->,blue] (23) -- (32) node [pos=.5,anchor=south west] {$t=4$};
    \end{tikzpicture}
%\end{center}
    \caption{The graph of the function $t\mapsto\left(t\operatorname{mod}2,t\operatorname{mod}4\right)$ is not onto the discrete torus $\mathbb{Z}/2\mathbb{Z}\times\mathbb{Z}/4\mathbb{Z}$, for instance $x_{1}\left(1\right)$ or $x_{1}\left(3\right)$ are never attained.}
    \label{fig:nowinding}
  \end{figure}
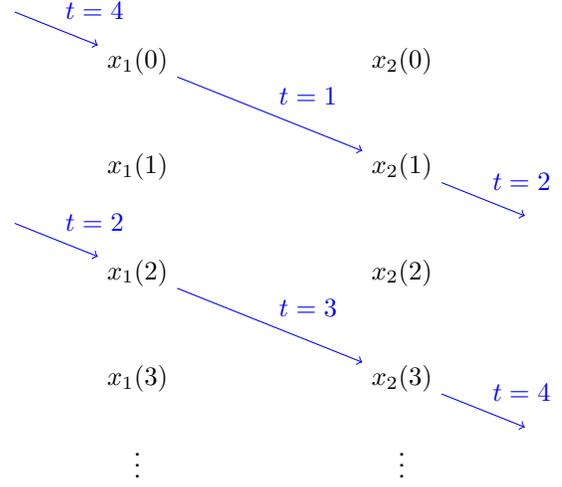

Hitherto, we have assumed $c$ to take the form \eqref{eq:standardbasis}. Next, we will turn our attention to more general $c$, thus returning to the general scheme depicted in Fig.\ \ref{fig:percompress}. Thereby, we will employ the insights gained from considering the greatest common divisor of $n$ and $p$. Now, assume $c$ to be $m$-periodic, 
\begin{equation}
 c\left(t+m\right)=c\left(t\right),\;\;\;m\geq n,
\end{equation}
with the property that any $n$ vectors chosen among $c\left(0\right)$, $\dots$, $c\left(m\right)$ will span $\mathbb{R}^{n}$. Before, this was the case with $m=n$. This lets the compression \eqref{eq:compression} be lossless provided that $m$ is greater than or equal to $n\operatorname{gcd}\left(m,p\right)$. If $m$ is, on the other hand, smaller than $n\operatorname{gcd}\left(m,p\right)$, then the compression is not lossless. This is stated in the following theorem.

\begin{theorem}
 Let $c$ be $m$-periodic with $m\geq n$ and let $x$ be $p$-periodic. If $c$ has the property that any $n$ vectors chosen from $c\left(0\right)$, $\dots$, $c\left(m\right)$ span $\mathbb{R}^{n}$, then $x$ can be reconstructed from $y$ if and only if $m\geq n\operatorname{gcd}\left(m,p\right)$.
\end{theorem}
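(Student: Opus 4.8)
The plan is to decouple the reconstruction of $x$ into the reconstruction of its $p$ distinct values $x\left(0\right),\dots,x\left(p-1\right)$, and to determine, for each such value, exactly which mixing vectors probe it. First I would note that since $x$ is $p$-periodic and $c$ is $m$-periodic, we have $y\left(t\right)=\langle c\left(t\operatorname{mod}m\right),x\left(t\operatorname{mod}p\right)\rangle$, so that every sample of $y$ is a linear functional applied to some $x\left(j\right)$, with the functional fixed by the residue $t\operatorname{mod}m$. Hence the total information available about a fixed value $x\left(j\right)$, $j\in\{0,\dots,p-1\}$, is the collection of inner products $\langle c\left(a\right),x\left(j\right)\rangle$ taken over all residues $a\in\{0,\dots,m-1\}$ for which the pair $\left(a,j\right)$ is realised by some time $t$, i.e.\ for which the simultaneous congruences $t\equiv a\left(\operatorname{mod}m\right)$ and $t\equiv j\left(\operatorname{mod}p\right)$ can be solved.

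The key number-theoretic step is to identify this index set. Writing $d=\operatorname{gcd}\left(m,p\right)$, the general, non-coprime form of the Chinese remainder theorem tells us that these congruences are solvable precisely when $a\equiv j\left(\operatorname{mod}d\right)$. Counting the residues $a\in\{0,\dots,m-1\}$ satisfying $a\equiv j\left(\operatorname{mod}d\right)$ then yields exactly $m/d$ of them, since $d$ divides $m$. Thus each value $x\left(j\right)\in\mathbb{R}^{n}$ is observed through exactly $m/d$ linear measurements, namely through the mixing vectors $c\left(a\right)$ with $a\equiv j\left(\operatorname{mod}d\right)$, and this count is the same for every $j$.

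It then remains to decide when these $m/d$ measurements determine $x\left(j\right)$ uniquely, which happens if and only if the corresponding mixing vectors span $\mathbb{R}^{n}$. If $m/d\geq n$, the probing set contains at least $n$ of the vectors $c\left(0\right),\dots,c\left(m-1\right)$, and by the hypothesis that any $n$ of these span $\mathbb{R}^{n}$, the whole probing set spans $\mathbb{R}^{n}$; the induced measurement map $\mathbb{R}^{n}\to\mathbb{R}^{m/d}$ is then injective and $x\left(j\right)$ is recovered. Conversely, if $m/d<n$, a linear map from $\mathbb{R}^{n}$ into fewer than $n$ coordinates necessarily has nontrivial kernel, so $x\left(j\right)$ cannot be reconstructed. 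As this dichotomy holds uniformly in $j$, reconstruction of $x$ is possible if and only if $m/d\geq n$, i.e.\ if and only if $m\geq n\operatorname{gcd}\left(m,p\right)$.

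I expect the main obstacle to be the number-theoretic bookkeeping of the second paragraph: one must invoke the correct generalization of the Chinese remainder theorem for non-coprime moduli to pin down the solvability condition $a\equiv j\left(\operatorname{mod}d\right)$, and then count the resulting residues carefully to obtain exactly $m/d$ distinct mixing vectors per value. Once this is in place, the linear-algebraic consequences, namely spanning in the case $m/d\geq n$ via the genericity hypothesis on $c$ and a nontrivial kernel in the case $m/d<n$ by dimension count, are essentially routine.
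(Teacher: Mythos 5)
Your proof is correct and takes essentially the same route as the paper: both arguments fix a single value $x\left(j\right)$, identify the set of mixing vectors that probe it, count exactly $m/\operatorname{gcd}\left(m,p\right)$ of them, and conclude via the spanning hypothesis (for sufficiency) and a dimension count (for necessity). Your use of the generalized Chinese remainder theorem to describe the probing set as a congruence class modulo $\operatorname{gcd}\left(m,p\right)$ is just a restatement of the paper's observation that $i\mapsto\left(t+ip\right)\operatorname{mod}m$ is $m/\operatorname{gcd}\left(m,p\right)$-periodic, so the two proofs differ only in bookkeeping, not in substance.
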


\begin{proof}
Some fixed $x\left(t\right)$ can be reconstructed from $y$ if and only if the vectors $c\left(t+ip\right)$, $i\in\mathbb{N}_{0}$, span $\mathbb{R}^{n}$. But with $c$ being $m$-periodic, it is sufficient to ask whether $c\left(\left(t+ip\right)\operatorname{mod}m\right)$, $i\in\mathbb{N}_{0}$, spans $\mathbb{R}^{n}$. Since it was assumed that any pair of $n$ vectors chosen from $c\left(0\right)$, $\dots$, $c\left(m\right)$ spans $\mathbb{R}^{n}$, we must only ask whether $i\mapsto\left(t+ip\right)\operatorname{mod}m$ assumes at least $n$ distinct values. With $i\mapsto\left(t+ip\right)\operatorname{mod}m$ being $m/\operatorname{gcd}\left(m,p\right)$-periodic, for $c\left(t+ip\right)$, $i\in\mathbb{N}_{0}$, to span $\mathbb{R}^{n}$, it is necessary and sufficient for that period to be greater than or equal to $n$. This was claimed.
\end{proof}

When contrasting the conditions from the previous theorem with the conditions from Theorem \ref{thm:main}, where $m$ was $n$, we find that $\operatorname{gcd}\left(n,p\right)$ was previously $1$, and that hence the inequality $m\geq n\operatorname{gcd}\left(m,p\right)$ was, before, met with equality. The foregoing result thus extends Theorem \ref{thm:main} in two directions: On the one hand, it generalizes the sequences which can be taken into account for $c$ from \eqref{eq:standardbasis} to arbitrary periodic $c$ for which $n$ vectors taken from one period are linearly independent. On the other hand, it is not assumed that $m$ and $p$ are coprime, but the greatest common divisor of $m$ and $p$ can be an arbitrary number smaller than or equal to $m/n$, i.e.
\begin{equation}
 \operatorname{gcd}\left(m,p\right)\leq m/n.
\end{equation}
In other words, here, the greatest common divisor of $m$ and $p$ must be sufficiently small, but not necessarily $1$.
  
We employed the condition that any $n$ vectors taken from one period of $c$ are linearly independent. This condition was meant to ensure that $c\left(\left(t+ip\right)\operatorname{mod}m\right)$, $i\in\mathbb{N}_{0}$, spans $\mathbb{R}^{n}$. An alternative formulation of this condition is that any $n\times n$ matrix $C$ whose columns are chosen from $c\left(0\right)$, $\dots$, $c\left(m\right)$, has rank $n$. This, in turn, is equivalent to the Gramian $CC^{\top}$ having rank $n$ and hence being positive definite. Summarizing, the Gramian  
\begin{equation}
\sum_{t\in T}c\left(t\right)c\left(t\right)^{\top}=CC^{\top}
\end{equation}
must be positive definite for any indexing set $T$ containing $n$ distinct integers chosen from $0$, $\dots$, $m$. This ensures that the signal $c$ recurrently points in sufficiently many distinct directions of $\mathbb{R}^{n}$. Such a signal $c$ is said to be sufficiently rich. It shall be remarked that this condition resembles the persistence of excitation condition known from adaptive filtering \cite{Bitmead1984}. The distinction between our condition and persistence of excitation of $c$ is that persistence of excitation only asks for $n$ consecutive values of $c$ to span $\mathbb{R}^{n}$, while our condition requires that any $n$ values of $c$ taken from one period span $\mathbb{R}^{n}$. Thus, our condition implies persistence of excitation, but not vice versa.
  
Above, we have devoted our attention to arbitrary periodic signals $x$. In applications, yet, it will often be the case that the signal $x$ is produced by some dynamical system. Should this be the case, then properties of that dynamical system can be taken into account in order to provide tangible conditions which are more particular than the generic conditions derived above. This will be the scope of the following section.
  
\section{Compression of Signals \\ Produced by Exosystems}
In the previous section, we asked whether the compressed signal $y$, defined by \eqref{eq:compression}, allows to reconstruct $x$ uniquely. Thereby, $x$ was an arbitrary signal which we assumed to be periodic. In the present section, we will let the evolution of $x$ be determined by a dynamical system, which is in this context referred to as an exosystem or signal generator. First, we will assume that all signals evolve in discrete time, as it was also presumed in the foregoing section. Thereafter, we will also consider signals evolving in discrete time.

\subsection{Exosystems Evolving in Discrete Time}
In this subsection, we consider signals $x$ which are generated by difference equations, i.e., by exosystems evolving in discrete time. First, recalling our motivation from image processing pointed our in the introduction, we will consider systems evolving under permutations, say a time series of pictures in which the pixels permute as time proceeds. This is modeled by letting each signal $x_{i}$ evolve under
\begin{equation}
 x_{i}\left(t+1\right)=x_{\sigma\left(i\right)}\left(t\right) \label{eq:permut}
\end{equation}
where $\sigma$ is a permutation of the indices $\lbrace 1,\dots,n\rbrace$. We now, again, let $c$ be the $n$-periodic sequence of standard basis vectors \eqref{eq:standardbasis}, motivated by a pixel readout subject to rolling shutter, as it was noted in the introduction. Under these circumstances, surjectivity of 
\begin{equation}
 \mathbb{N}_{0}\times\lbrace1,\dots,n\rbrace\to\lbrace 1,\dots,n\rbrace,\;\;\;\left(i,j\right)\mapsto\sigma^{in+j-1}\left(j\right) \label{eq:iterpermut}
\end{equation}
is necessary and sufficient for lossless compression. This is stated in the following proposition.
\begin{proposition}
Let $c$ be the $n$-periodic sequence of standard basis vectors \eqref{eq:standardbasis} and let all $x_{i}$ evolve under \eqref{eq:permut}, for some permutation $\sigma$. Then $x$ can be uniquely reconstructed from $y$ if and only if \eqref{eq:iterpermut} is surjective.
\end{proposition}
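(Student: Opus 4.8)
The plan is to unroll the permutation dynamics, substitute the mixing signal into the compression formula, and thereby reduce the reconstruction question to the surjectivity of \eqref{eq:iterpermut}. First I would solve the recursion \eqref{eq:permut} in closed form. A straightforward induction on $t$ shows that $x_{i}(t+1)=x_{\sigma(i)}(t)$ implies $x_{i}(t)=x_{\sigma^{t}(i)}(0)$, so that the entire trajectory of $x$ is determined by the single initial vector $x(0)=(x_{1}(0),\dots,x_{n}(0))$, whose $n$ components are free, independent parameters in $\mathbb{R}$ (and the resulting $x$ is automatically periodic, its period being the order of $\sigma$). Hence reconstructing $x$ from $y$ is equivalent to recovering all $n$ scalars $x_{1}(0),\dots,x_{n}(0)$.

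Next I would substitute $c$ into \eqref{eq:compression}. As in the proof of Theorem \ref{thm:main}, choosing $c$ as \eqref{eq:standardbasis} gives $y(t)=x_{(t\operatorname{mod}n)+1}(t)$, and combining this with the closed form above yields
\[ y(t)=x_{\sigma^{t}\left((t\operatorname{mod}n)+1\right)}(0). \]
I would then reparametrize time by writing $t=in+(j-1)$ with $i\in\mathbb{N}_{0}$ and $j\in\{1,\dots,n\}$; since $0\leq j-1\leq n-1$, one has $(t\operatorname{mod}n)+1=j$ and $\sigma^{t}=\sigma^{in+j-1}$, whence $y(in+j-1)=x_{\sigma^{in+j-1}(j)}(0)$. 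Thus the set of indices $k$ for which $x_{k}(0)$ ever influences the compressed signal is exactly the image of the map \eqref{eq:iterpermut}.

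Finally I would conclude in both directions. If \eqref{eq:iterpermut} is surjective, every $x_{k}(0)$ equals some observed value $y(in+j-1)$, so $x(0)$, and therefore the whole trajectory, can be read off from $y$. Conversely, if \eqref{eq:iterpermut} is not surjective, I would fix an index $k$ outside its image: then no value $y(t)$ depends on $x_{k}(0)$, so two initial vectors differing only in their $k$th component generate identical $y$, and $x$ cannot be uniquely reconstructed. The induction and the modular bookkeeping are routine; the step requiring the most care is this \emph{only if} direction, where one must argue that an unobserved component is genuinely undetermined. That argument rests precisely on the fact that the components of $x(0)$ are unconstrained by the dynamics, so that perturbing a single unobserved coordinate produces a legitimate alternative signal with the same compressed output.
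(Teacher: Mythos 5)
Your proof is correct and follows essentially the same route as the paper's: unroll \eqref{eq:permut} to get $x_{i}(t)=x_{\sigma^{t}(i)}(0)$, substitute \eqref{eq:standardbasis} into \eqref{eq:compression}, and use Euclidean division $t=in+j-1$ to identify $y(t)=x_{\sigma^{in+j-1}(j)}(0)$, so that reconstructibility of $x(0)$ is equivalent to surjectivity of \eqref{eq:iterpermut}. The only difference is that you spell out the \emph{only if} direction explicitly (perturbing an unobserved coordinate of $x(0)$ yields a distinct admissible signal with identical compressed output), a step the paper's proof leaves implicit in its ``if and only if'' assertion.
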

\begin{proof}
Consider some fixed $t$. Euclidean division by $n$ yields 
\begin{equation}
 t=in+j-1 \label{eq:Euclid}
\end{equation}
where $i$ is the quotient and $j-1=t\operatorname{mod}n$ is the remainder. With this notation and $c$ being the $n$-periodic sequence of standard basis vectors \eqref{eq:standardbasis}, we find that $y\left(t\right)=x_{j}\left(in+j-1\right)$. Repeating the iteration \eqref{eq:permut} for $in+j-1$ times leaves us with $y\left(t\right)=x_{\sigma^{in+j-1}\left(j\right)}\left(0\right)$. Thus $x\left(0\right)$ can be uniquely reconstructed if and only if $t\mapsto\sigma^{in+j-1}\left(j\right)$, with $i$ and $j$ related to $t$ via Euclidean division, attains every index at least once. All other $x\left(t\right)$ are reconstructed by applying $\sigma$.
\end{proof}

Surjectivity of \eqref{eq:iterpermut} has a number-theoretical interpretation in terms of cycles of $\sigma$. If $n$ is a common multiple of all cycle lengths $\ell_{1},\dots,\ell_{m}$ of $\sigma$, i.e., if there is some $p\in\mathbb{N}$ such that
\begin{equation}
\operatorname{lcm}\left(\ell_{1},\dots,\ell_{m}\right)=n/p,
\end{equation}
then the image of \eqref{eq:iterpermut} cannot be greater than the image of $j\mapsto\sigma^{j-1}\left(j\right)$, i.e., for the image of \eqref{eq:iterpermut} to be large and thus reveal new information about $x$, the cycle lengths of $\sigma$ must not be in resonance with $n$.

\newpage

\begin{example}
We consider $n=5$ signals evolving under the iteration \eqref{eq:permut}, where $\sigma$, in cycle notation, reads
\begin{equation}
 \left(4\;2\;3\;1\right)\;\left(5\right).
\end{equation}
Here, the iterated permutation \eqref{eq:iterpermut} attains every index. Although $j\mapsto\sigma^{j-1}\left(j\right)$ attains the values $1$, $3$, $4$, and $5$, it does not attain the value $2$, i.e., $x_{2}\left(0\right)$ cannot be reconstructed before $t=n=5$. But as $n=5$ is no common multiple of the cycle lengths $\ell_{1}=4$ and $\ell_{2}=1$, the compressed signal $y$ will provide new information after $t=5$. In particular, Euclidean division of $t=7$ by $n=5$ yields $1n+3-1$ and $j=3$ is mapped to $2$ by $\sigma^{7}$, i.e., $x_{2}\left(0\right)$ becomes visible in $y\left(7\right)$. If we were to delete the $1$-cycle $5\mapsto 5$ to only consider $n=4$ signals, then this effect would no longer occur and hence $x$ could not be reconstructed from $y$. The reason for this is that now the only left cycle length $\ell_{1}=4$ equals $n=4$ and that hence \eqref{eq:iterpermut} does not cover more indices than $j\mapsto\sigma^{j-1}\left(j\right)$, which, as we saw above, never attains $2$, leaving $x_{2}\left(0\right)$ undisclosed to us.
\end{example}

Above we saw how resonances can again, also for signals produced by exosystems, help us to understand lossless compressors. This shall be reason enough to consider $2$ signals evolving under repeated application of a rotation. This, at first glance, appears to be very particular, but will extend to more general exosystems momentarily. For now, consider 
\begin{align}
 x_{1}\left(t+1\right)&=x_{1}\left(t\right)\operatorname{cos}\left(\alpha\right)-x_{2}\left(t\right)\hspace{.5pt}\operatorname{sin}\hspace{.5pt}\left(\alpha\right), \label{eq:rotsys1}\\
 x_{2}\left(t+1\right)&=x_{1}\left(t\right)\hspace{.5pt}\operatorname{sin}\hspace{.5pt}\left(\alpha\right)+x_{2}\left(t\right)\operatorname{cos}\left(\alpha\right), \label{eq:rotsys2}
\end{align}
i.e., the vector $x\left(t\right)$ is rotated counter-clockwise through the angle $\alpha\in\left(0,2\pi\right)$ in order to produce $x\left(t+1\right)$. Again choosing $c$ as \eqref{eq:standardbasis}, a sufficient condition for lossless compression is that it is possible to find an odd and an even multiple of $\alpha$ which are the same angle, i.e., which yield the same real number modulo $2\pi$. More formally,
\begin{equation}
 \exists\hspace{.75pt} p,q\in\mathbb{N}_{0}:\;\;\;\left(2p+1\right)\alpha\equiv2q\alpha\;\;\;\hspace{.75pt}\left(\operatorname{mod}2\pi\right) \label{eq:rotationalresonance}
\end{equation}
leads to lossless compression of signals produced by the exosystem \eqref{eq:rotsys1}-\eqref{eq:rotsys2}, as stated in the following proposition.

\begin{proposition}\label{prop:rot}
 Let $c$ be the $n$-periodic sequence of standard basis vectors \eqref{eq:standardbasis} and let $x$ evolve under \eqref{eq:rotsys1}-\eqref{eq:rotsys2}, for some $\alpha\in\left(0,2\pi\right)$. If there is an odd and an even multiple of $\alpha$ which are the same real number modulo $2\pi$, then $x$ can be uniquely reconstructed from $y$.
\end{proposition}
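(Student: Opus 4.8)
The plan is to recast the problem as one of recovering the initial condition $x(0)$ from a family of linear measurements, and then to read off the geometry of those measurements. Since \eqref{eq:rotsys1}-\eqref{eq:rotsys2} say that $x(t)=R^{t}x(0)$, where $R$ is the (known) rotation matrix through $\alpha$, it suffices to reconstruct $x(0)$: every other $x(t)$ is then obtained by iterating $R$ forward, exactly as the previous proposition reconstructs all states by applying $\sigma$. Choosing $c$ as \eqref{eq:standardbasis} with $n=2$ means $c(t)=e_{1}$ when $t$ is even and $c(t)=e_{2}$ when $t$ is odd, so that
\begin{equation}
 y(t)=\langle c(t),R^{t}x(0)\rangle=\langle R^{-t}c(t),x(0)\rangle.
\end{equation}
Thus each sample $y(t)$ is the inner product of $x(0)$ with a known measurement vector $v(t):=R^{-t}c(t)$, and, since $x(0)\in\mathbb{R}^{2}$, it is uniquely reconstructable precisely when the $v(t)$, $t\in\mathbb{N}_{0}$, span $\mathbb{R}^{2}$, i.e.\ when at least two of them are linearly independent.

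First I would identify the directions of these measurement vectors as angles. Because $R^{-t}$ is rotation through $-t\alpha$, while $e_{1}$ and $e_{2}$ point at angles $0$ and $\pi/2$, the even samples $t=2q$ give $v(2q)=R^{-2q}e_{1}$, a unit vector at angle $-2q\alpha$, whereas the odd samples $t=2p+1$ give $v(2p+1)=R^{-(2p+1)}e_{2}$, a unit vector at angle $\pi/2-(2p+1)\alpha$. Two such unit vectors are linearly independent exactly when their angular difference is not an integer multiple of $\pi$.

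Then I would invoke the hypothesis. Picking $p,q\in\mathbb{N}_{0}$ with $(2p+1)\alpha\equiv2q\alpha\;(\operatorname{mod}2\pi)$ and comparing the even vector $v(2q)$ with the odd vector $v(2p+1)$, their angular difference is
\begin{equation}
 \left(\tfrac{\pi}{2}-(2p+1)\alpha\right)-(-2q\alpha)=\tfrac{\pi}{2}-\big((2p+1)\alpha-2q\alpha\big)\equiv\tfrac{\pi}{2}\;\;(\operatorname{mod}2\pi),
\end{equation}
where the final congruence is exactly the resonance condition \eqref{eq:rotationalresonance}. Hence $v(2q)$ and $v(2p+1)$ are orthogonal, so in particular independent, and $x(0)$ — and with it the entire signal $x$ — is determined by solving the resulting $2\times2$ linear system. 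The conceptual point I would emphasize is that the condition makes the two rotational contributions cancel modulo $2\pi$, leaving precisely the built-in $\pi/2$ angle between $e_{1}$ and $e_{2}$; this is why the statement asks specifically for an \emph{odd} and an \emph{even} multiple to coincide.

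I do not expect a genuine obstacle here: once the measurement-vector viewpoint is in place, the remainder is a short piece of planar geometry. The only points requiring care are bookkeeping the sign of the rotation and the $\pi/2$ offset between $e_{1}$ and $e_{2}$ correctly, and noting that the reconstruction presupposes knowledge of $\alpha$ (equivalently of $R$), which is part of the given exosystem. Since the claim asserts only sufficiency, no converse analysis of which angles $\alpha$ fail \eqref{eq:rotationalresonance} is required.
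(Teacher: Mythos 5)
Your proposal is correct and takes essentially the same route as the paper: the $2\times 2$ system you solve is exactly the paper's, since the matrix whose rows are your measurement vectors $v(2q)=R^{-2q}e_{1}$ and $v(2p+1)=R^{-(2p+1)}e_{2}$ is precisely the rotation matrix through the angle $(2p+1)\alpha\equiv 2q\alpha\;(\operatorname{mod}2\pi)$ that the paper shows relates $x(0)$ to the pair $\left(y(2q),y(2p+1)\right)$. Your observation that the two unit vectors are orthogonal is just the statement that this matrix is a rotation, hence invertible, so the two arguments coincide.
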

\begin{proof}
For odd $t$, we notice that $y\left(t\right)$ equals $x_{1}\left(t\right)$ and for even $t$, one finds that $y\left(t\right)$ equals $x_{2}\left(t\right)$. Applying \eqref{eq:rotsys1}-\eqref{eq:rotsys2} for $t$ iterations, we find that $y\left(t\right)$ is given by 
\begin{equation}
 y\left(t\right)=\begin{cases}x_{1}\left(0\right)\hspace{.5pt}\operatorname{sin}\hspace{.5pt}\left(\alpha t\right)+x_{2}\left(0\right)\operatorname{cos}\left(\alpha t\right) & t\text{ odd} \\ x_{1}\left(0\right)\operatorname{cos}\left(\alpha t\right)-x_{2}\left(0\right)\hspace{.5pt}\operatorname{sin}\hspace{.5pt}\left(\alpha t\right) & t\text{ even}\end{cases}
\end{equation}
such that, under the condition \eqref{eq:rotationalresonance}, the real numbers $y\left(2q\right)$ and $y\left(2p+1\right)$ are related to the vector $x\left(0\right)$ through a rotation matrix which rotates $x\left(0\right)$ counter-clockwise through the angle $\left(2p+1\right)\alpha\operatorname{mod}2\pi=2q\alpha\operatorname{mod}2\pi$. But rotation matrices are invertible. Thus $x\left(0\right)$ can be uniquely reconstructed from $y\left(2q\right)$ and $y\left(2p+1\right)$. All other $x\left(t\right)$ are reconstructed by counter-clockwise rotation through $\alpha$.
\end{proof}

 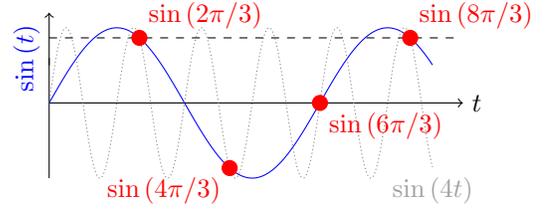
\begin{figure}
 \centering
%  \begin{tikzpicture}
%\draw[->] (0,-1.5) -- (0,1.5);
%\draw[->] (-1.5,0) -- (1.5,0);
%\draw (0,0) circle (1);
%\filldraw[red] ({cos(360/3)},{sin(360/3)}) circle (.1) node[anchor=south east] {$\frac{2\pi}{3}=\frac{8\pi}{3}\operatorname{mod}2\pi$};
%\filldraw[red] ({cos(2*360/3)},{sin(2*360/3)}) circle (.1) node[anchor=north east] {$\frac{4\pi}{3}$};
%\filldraw[red] ({cos(3*360/3)},{sin(3*360/3)}) circle (.1) node[anchor=north west] {$\frac{6\pi}{3}\operatorname{mod}2\pi=0$};
%  \end{tikzpicture}
% \vspace{3.5em}
 
    \begin{tikzpicture}
\draw[->] (0,-1) -- (0,1.2);
\draw (0,.5) -- (0,.6)  node[anchor=south,rotate=90] {\color{blue}$\operatorname{sin}\left(t\right)$};
\draw[->] (0,0) -- (5.5,0) node[anchor=west] {$t$};
   \draw[domain=0:5.1,samples=1000,smooth,variable=\x,blue] plot ({\x},{sin(100*\x )});
   \draw[domain=0:5.1,samples=1000,smooth,variable=\x,gray,densely dotted] plot ({\x},{sin(400*\x )});
   \draw[white] (5.1,{sin(2*100*3.6/3)}) -- (5.1,{sin(2*100*3.6/3)}) node[anchor=north] {\color{gray!75!white}$\operatorname{sin}\left(4t\right)$};
\draw[dashed] (0,{sin(4*100*3.6/3)}) -- (5.4,{sin(4*100*3.6/3)});
   \filldraw[red] ({3.6/3},{sin(100*3.6/3)}) circle (.1) node[anchor=south west] {$\operatorname{sin}\left(2\pi/3\right)$};
\filldraw[red] ({2*3.6/3},{sin(2*100*3.6/3)}) circle (.1) node[anchor=north east] {$\operatorname{sin}\left(4\pi/3\right)$};
\filldraw[red] ({3*3.6/3},{sin(3*100*3.6/3)}) circle (.1) node[anchor=north west] {$\operatorname{sin}\left(6\pi/3\right)$};
\filldraw[red] ({4*3.6/3},{sin(4*100*3.6/3)}) circle (.1) node[anchor=south west] {$\operatorname{sin}\left(8\pi/3\right)$};
   %   \draw[domain=0:7,samples=1000,smooth,variable=\x,blue] plot ({\x},{sin(100*\x )});
\end{tikzpicture}

\caption{The odd and even multiples $2\pi/3$ and $8\pi/3$ of $2\pi/3$ are the same real numbers modulo $2\pi$, so $\operatorname{sin}\left(t\right)$ assumes the same value at $t=2\pi/3$ and at $t=8\pi/3$, but also $\operatorname{sin}\left(t\right)=\operatorname{sin}\left(4t\right)$ when $t$ is a multiple of $2\pi/3$.}\label{fig:res}
 \end{figure}

\begin{example}
Consider $n=2$ signals evolving under \eqref{eq:rotsys1}-\eqref{eq:rotsys2} where $\alpha$ is taken to be $2\pi/3$. It is now the case that $\alpha$ equals $4\alpha$ modulo $2\pi$, i.e., \eqref{eq:rotationalresonance} holds. In particular, $x$ can be reconstructed from $y\left(1\right)$ and $y\left(4\right)$. Two different interpretations of the resonance 
\begin{equation}
2\pi/3\equiv 8\pi/3\;\;\; \left(\operatorname{mod}2\pi\right)
\end{equation}
are depicted in Fig. \ref{fig:res}. Therein, we find that, on the one hand, $\operatorname{sin}\left(2\pi/3\right)$ equals $\operatorname{sin}\left(8\pi/3\right)$, and on the other hand, $\operatorname{sin}\left(t\right)$ equals $\operatorname{sin}\left(4t\right)$ at all $t$ which are integer multiples of $2\pi/3$. If we took $\alpha=\pi$, then, although the condition \eqref{eq:rotationalresonance} now cannot be satisfied, for any even multiple of $\pi$ equals zero modulo $2\pi$ and every odd multiple of $\pi$ equals $\pi$ modulo $2\pi$, still the matrix relating $y\left(1\right)$ and $y\left(2\right)$ to $x\left(0\right)$ has full rank and thus $x$ can be uniquely reconstructed from $y$. This illustrates that \eqref{eq:rotationalresonance} is sufficient, but not necessary for lossless compression.
\end{example}

As pointed out beforehand, \eqref{eq:rotsys1}-\eqref{eq:rotsys2} should only serve as a prototype for a more general class of exosystems, viz. for
\begin{equation}
 x\left(t+1\right)=Gx\left(t\right), \label{eq:groupsys}
\end{equation}
where $G$ is a real $n\times n$ matrix that belongs to some matrix (Lie) group $\operatorname{G}$, e.g. the rotation group $\operatorname{SO}\left(2\right)$ as above, the orthogonal group $\operatorname{O}\left(n\right)$, or the special linear group $\operatorname{SL}\left(n\right)$. We recall that permutation matrices are also a group, so \eqref{eq:iterpermut} falls under the above class of exosystems as well. During the proof of Proposition \ref{prop:rot}, the crucial finding was that a rotation matrix related some particular $y\left(t\right)$ to $x\left(0\right)$. As rotation matrices are invertible, this insight allowed us to conclude that our compression is lossless. Now, that we assumed $x$ to evolve under repeated left-multiplication with some $G\in\operatorname{G}$, we must hope that some other matrix from $\operatorname{G}$, call it $G^{\prime}$, will relate some particular $y\left(t\right)$ to $x\left(0\right)$. As all matrices in $\operatorname{G}$ are invertible, this would again allow us to uniquely reconstruct $x$. As we will see next, if $G^{\prime}\in\operatorname{G}$ satisfies
\begin{equation}
 \forall j=1,\dots,n\;\;\;\;\;\exists i\in\mathbb{N}_{0}:\;\;\;\;\; G^{in+j-1}=G^{\prime}, \label{eq:groupresonance}
\end{equation}
then the aforementioned relation between $y$ and $x$ will hold and thus compression of signals produced by \eqref{eq:groupsys} through the sequence of vectors \eqref{eq:standardbasis} is lossless. This is also claimed in the following proposition.
\begin{proposition}
 Let $c$ be the $n$-periodic sequence of standard basis vectors \eqref{eq:standardbasis} and let $x$ evolve under \eqref{eq:groupsys}, for some $G\in\operatorname{G}$. If, there is an $G^{\prime}\in\operatorname{G}$ with the property that, for all $j=1,\dots,n$, it is possible to find an $i\in\mathbb{N}_{0}$ such that $G^{in+j-1}=G^{\prime}$, then $x$ can be uniquely reconstructed from $y$.
 \end{proposition}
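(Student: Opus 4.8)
The plan is to imitate the proof of the permutation proposition, tracking which single coordinate of the evolved state $G^{t}x\left(0\right)$ is exposed by $y$ at each time $t$. First I would substitute the standard-basis sequence \eqref{eq:standardbasis} for $c$ in the compression \eqref{eq:compression}. Since then $c\left(t\right)=e_{\left(t\operatorname{mod}n\right)+1}$, the inner product simply reads off one coordinate, so that $y\left(t\right)=e_{\left(t\operatorname{mod}n\right)+1}^{\top}x\left(t\right)$. Using the exosystem \eqref{eq:groupsys}, i.e.\ $x\left(t\right)=G^{t}x\left(0\right)$, this becomes $y\left(t\right)=e_{\left(t\operatorname{mod}n\right)+1}^{\top}G^{t}x\left(0\right)$. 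Performing Euclidean division of $t$ by $n$ exactly as in \eqref{eq:Euclid}, writing $t=in+j-1$ with $j-1=t\operatorname{mod}n$ and $j\in\left\lbrace 1,\dots,n\right\rbrace$, gives $c\left(t\right)=e_{j}$ and hence $y\left(t\right)=e_{j}^{\top}G^{in+j-1}x\left(0\right)$, i.e.\ $y\left(t\right)$ is the $j$th coordinate of the vector $G^{in+j-1}x\left(0\right)$.

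The core of the argument is then to invoke the hypothesis \eqref{eq:groupresonance}: for each $j=1,\dots,n$ there is some $i_{j}\in\mathbb{N}_{0}$ with $G^{i_{j}n+j-1}=G^{\prime}$. Setting $t_{j}:=i_{j}n+j-1$, which satisfies $t_{j}\operatorname{mod}n=j-1$ so that indeed $c\left(t_{j}\right)=e_{j}$, I would read off $y\left(t_{j}\right)=e_{j}^{\top}G^{t_{j}}x\left(0\right)=e_{j}^{\top}G^{\prime}x\left(0\right)$, which is precisely the $j$th coordinate of $G^{\prime}x\left(0\right)$. As $j$ ranges over $1,\dots,n$, the $n$ values $y\left(t_{1}\right),\dots,y\left(t_{n}\right)$ therefore deliver all $n$ coordinates of the \emph{single} vector $G^{\prime}x\left(0\right)$, recovering it in full. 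Since $G^{\prime}\in\operatorname{G}$ and every element of a matrix group is invertible, $x\left(0\right)=\left(G^{\prime}\right)^{-1}\bigl(G^{\prime}x\left(0\right)\bigr)$ is uniquely determined, and all remaining $x\left(t\right)=G^{t}x\left(0\right)$ follow by repeated left-multiplication with $G$.

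I expect the main subtlety to lie not in any computation but in recognizing why a \emph{common} matrix $G^{\prime}$ is the right object to demand, even though the exponents $i_{j}$ are allowed to differ across the coordinates $j$. The point is that the coordinate reading at time $t_{j}$ only reveals an entry of whatever vector $G^{t_{j}}x\left(0\right)$ happens to equal at that instant; for the $n$ separate readings to be assembled into one vector that can be inverted back to $x\left(0\right)$, they must all refer to the same target state. The condition \eqref{eq:groupresonance}, by forcing $G^{t_{j}}=G^{\prime}$ for one fixed $G^{\prime}$ across every $j$, is exactly what guarantees this consistency, so that invertibility of $G^{\prime}$ in $\operatorname{G}$ suffices. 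This is the group-theoretic analogue of the resonance alignment already exploited for permutations in \eqref{eq:iterpermut}.
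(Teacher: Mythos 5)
Your proof is correct and follows essentially the same route as the paper's: Euclidean division $t=in+j-1$ to identify $y\left(t\right)$ as the $j$th coordinate (equivalently, the inner product of $x\left(0\right)$ with the $j$th row) of $G^{in+j-1}x\left(0\right)$, then using the hypothesis \eqref{eq:groupresonance} to align all $n$ readings to the single invertible matrix $G^{\prime}$, and finally inverting $G^{\prime}$ and propagating with $G$. Your closing remark on why a \emph{common} $G^{\prime}$ is needed is a nice explicit articulation of a point the paper leaves implicit, but the argument itself is the same.
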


 \begin{proof}
Consider some fixed $t$. Euclidean division by $n$ yields \eqref{eq:Euclid} where $i$ is the quotient and $j-1=t\operatorname{mod}n$ is the remainder. With this notation and $c$ being the $n$-periodic sequence of standard basis vectors \eqref{eq:standardbasis}, we find that $y\left(t\right)=x_{j}\left(in+j-1\right)$. Repeating the iteration \eqref{eq:groupsys} for $in+j-1$ times reveals that $y\left(t\right)$ is the inner product of $x_{0}$ and the $j$th row of $G^{in+j-1}$. So if there is an $G^{\prime}$ in $\operatorname{G}$ with the property \eqref{eq:groupresonance}, then, for every row of $G^{\prime}$, it will be possible to find $t$ such that $y\left(t\right)$ is the inner product of $x_{0}$ and that row. Thus there will be some $y\left(t\right)$ which are related to $x_{0}$ through $G^{\prime}$. As $G^{\prime}$ is from $\operatorname{G}$, it must be invertible. Therefore, $x\left(0\right)$ can be uniquely reconstructed from those $y\left(t\right)$. All other $x\left(t\right)$ are reconstructed by left-multiplication with $G$.
 \end{proof}

 \begin{example}
Consider $n=3$ signals evolving under the iteration \eqref{eq:groupsys} where $G$ is taken to be the real $3\times 3$ matrix
 \begin{equation}
  G=\begin{pmatrix}-1 & 1 & \hphantom{-}1 \\ \hphantom{-}1 & 1 & -1 \\ -3/2 & 3/2 & \hphantom{-}1\end{pmatrix}
 \end{equation}
which is a member of the special linear group $\operatorname{G}=\operatorname{SL}\left(3\right)$. We find that the powers $G^{3}$, $G^{7}$, and $G^{11}$ all equal 
\begin{equation}
 G^{\prime}=\begin{pmatrix}5/2 & 1/2 & -2 \\ 1/2 & 1/2 & \hphantom{-}0 \\ 3 & 0 & -2\end{pmatrix}
\end{equation}
which also belongs to $\operatorname{SL}\left(3\right)$. Euclidean division of $3$, $7$, and $11$ by $n=3$ yields the remainders $0$, $1$, and $2$, so \eqref{eq:groupresonance} holds and thus our compression is lossless. In particular, $x$ can be reconstructed from $y\left(3\right)$, $y\left(7\right)$, and $y\left(11\right)$.
\end{example}

Approaching the end of this subsection, we are now in the position to return to one of our motivating examples: communication networks with a shared communication medium and round-robin scheduling. We assume that measurements from $N$ processes of the type \eqref{eq:rotsys1}-\eqref{eq:rotsys2} are taken, i.e., we aim to reconstruct the evolution of the dynamical systems 
\begin{equation}
 x_{i}\left(t+1\right)=\begin{pmatrix}\operatorname{cos}\left(\alpha_{i}\right) & -\operatorname{sin}\left(\alpha_{i}\right) \\ \operatorname{sin}\left(\alpha_{i}\right) & \hphantom{-}\operatorname{cos}\left(\alpha_{i}\right)\end{pmatrix}x_{i}\left(t\right) \label{eq:rrsyss}
\end{equation}
with $i$ ranging from $1$ to $N$. Further, let every system be equipped with a sensing device returning the measurements $y_{i}\left(t\right)=\langle e_{1},x_{i}\left(t\right)\rangle$, i.e., the first entry of the $i$th signal. But those measurements are transmitted over a shared communication medium with round-robin scheduling, as it is depicted in Fig.\ \ref{fig:rr}, i.e., at time $t$, the central processing entity which gathers all measurements will only obtain $y\left(0\right)=y_{1}\left(0\right)$, $y\left(1\right)=y_{2}\left(1\right)$, and so forth, i.e., $y\left(t\right)=y_{\left(t\operatorname{mod}N\right)+1}\left(t\right)$. This is also modeled by stacking all the signals $x_{i}$ into a $\mathbb{R}^{2N}$-valued signal $x$ and letting $c$ be the sequence of odd-indexed standard basis vectors 
\begin{equation}
 e_{1},\;e_{3},\;e_{5},\;\dots,\;e_{2N-1},\;e_{1},\;e_{3},\;\dots \label{eq:oddvectors}
\end{equation}
which is $N$-periodic. Our objective is to reconstruct the signals $x_{i}$ produced by the systems \eqref{eq:rrsyss} although only knowing the messages $y$ which were transmitted over the shared communication medium with round-robin scheduling. 

We will show that reconstructing $x$ is possible if no $N$-fold multiple of any $\alpha_{i}$ is a multiple of $2\pi$, i.e., if it holds true that
\begin{equation}
\forall i=1,\dots,N,\;\;\;\;\;\;\; N\alpha_{i}\not\equiv 0\;\;\;\left(\operatorname{mod}2\pi\right). \label{eq:Nalphaineq0mod2pi}
\end{equation}

\begin{figure}
 \centering
 \begin{tikzpicture}
  \begin{scope}[xscale=-.6,shift={(-7.5,2.75)}]
 \draw (0,-1) -- (2.5,-2);
  \draw[dotted] (0,-2) -- (2.5,-2);	
  \draw[dotted] (0,0) -- (2.5,-2) node[pos=.6,anchor=south east] {$n$-periodic};	
  \draw[dotted] (0,-4) -- (2.5,-2);	
  \draw[->] ([shift=(159:.8075)]2.5,-2) arc (159:180:.8075);  
  \end{scope}
\draw[rounded corners=3pt] (.05,.05) rectangle (1.95,1.45);
 \draw (0,0) rectangle (2,1.5) node [pos=.5, align=center] {central \\ processing \\ entity};
 \begin{scope}[shift={(.5,-.5)}]
 \draw[fill=black] (7,1.25) circle (.075);
 \draw[fill=white] (5,1) rectangle (7,1.5) node [pos=.5] {sensor 3};
 \draw[line width=1.5pt] (7,.993) -- (7,1.508);
 \draw (7.15,1.25) arc (0:70:.1);
 \draw (7.15,1.25) arc (0:-70:.1);
 \draw (7.2,1.25) arc (0:70:.125);
 \draw (7.2,1.25) arc (0:-70:.125);
 \draw (7.25,1.25) arc (0:70:.15);
 \draw (7.25,1.25) arc (0:-70:.15);
 \draw (4,1.25) -- (5,1.25);
 \draw[fill=white] (4,1.25) circle (.1);
 \end{scope}
  \begin{scope}[shift={(.5,.5)}]
 \draw[fill=black] (7,1.25) circle (.075);
 \draw[fill=white] (5,1) rectangle (7,1.5) node [pos=.5] {sensor 2};
 \draw[line width=1.5pt] (7,.993) -- (7,1.508);
 \draw (7.15,1.25) arc (0:70:.1);
 \draw (7.15,1.25) arc (0:-70:.1);
 \draw (7.2,1.25) arc (0:70:.125);
 \draw (7.2,1.25) arc (0:-70:.125);
 \draw (7.25,1.25) arc (0:70:.15);
 \draw (7.25,1.25) arc (0:-70:.15);
 \draw (4,1.25) -- (5,1.25);
 \draw[fill=white] (4,1.25) circle (.1);
 \end{scope}
 \begin{scope}[shift={(.5,1.5)}]
 \draw[fill=black] (7,1.25) circle (.075);
 \draw[fill=white] (5,1) rectangle (7,1.5) node [pos=.5] {sensor 1};
 \draw[line width=1.5pt] (7,.993) -- (7,1.508);
 \draw (7.15,1.25) arc (0:70:.1);
 \draw (7.15,1.25) arc (0:-70:.1);
 \draw (7.2,1.25) arc (0:70:.125);
 \draw (7.2,1.25) arc (0:-70:.125);
 \draw (7.25,1.25) arc (0:70:.15);
 \draw (7.25,1.25) arc (0:-70:.15);
 \draw (4,1.25) -- (5,1.25);
 \draw[fill=white] (4,1.25) circle (.1);
 \end{scope}
 \begin{scope}[shift={(.5,-2.5)}]
 \draw[fill=black] (7,1.25) circle (.075);
 \draw[fill=white] (5,1) rectangle (7,1.5) node [pos=.5] {sensor $N$};
 \draw[line width=1.5pt] (7,.993) -- (7,1.508);
 \draw (7.15,1.25) arc (0:70:.1);
 \draw (7.15,1.25) arc (0:-70:.1);
 \draw (7.2,1.25) arc (0:70:.125);
 \draw (7.2,1.25) arc (0:-70:.125);
 \draw (7.25,1.25) arc (0:70:.15);
 \draw (7.25,1.25) arc (0:-70:.15);
 \draw (4,1.25) -- (5,1.25);
 \draw[fill=white] (4,1.25) circle (.1);
 \end{scope}
 \draw[white] (4.5,-.25) -- (4.5,-.25) node[yshift=3] {\color{black}$\vdots$};
 \draw (3,.75) -- (2,.75);
 \draw[fill=white] (3,.75) circle (.1);
\end{tikzpicture}
\caption{A group of $N$ sensing devices transmitting their measurements over a shared communication medium with round-robin scheduling.}\label{fig:rr}
\end{figure}
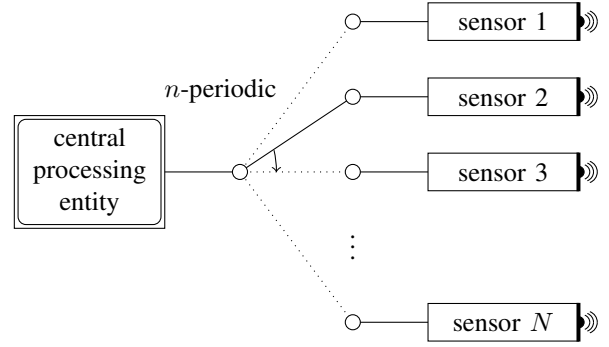

\begin{proposition}
Let $c$ be the $N$-periodic sequence of standard basis vectors \eqref{eq:oddvectors} and let all $x_{i}$ evolve under \eqref{eq:rrsyss}, for some $\alpha_{i}\in\left(0,2\pi\right)$. Then $x$ can be uniquely reconstructed from $y$ if and only if no $N$-fold multiple of any $\alpha_{i}$ is a multiple of $2\pi$.
\end{proposition}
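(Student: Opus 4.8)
The plan is to decouple the problem across the $N$ subsystems and reduce it to a single rotation. Since each block $x_i$ evolves under the invertible matrix in \eqref{eq:rrsyss}, the trajectory $x_i\left(t\right)=R\left(t\alpha_i\right)x_i\left(0\right)$, with $R\left(\theta\right)$ the rotation through $\theta$, is determined by $x_i\left(0\right)$ alone. Hence the whole signal $x$ is reconstructible if and only if every initial state $x_i\left(0\right)\in\mathbb{R}^2$ can be recovered from $y$, and it remains to analyse one rotation at a time.

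Next I would pin down exactly which linear functionals of $x_i\left(0\right)$ are transmitted. With $c$ the $N$-periodic sequence \eqref{eq:oddvectors}, the scalar $y\left(t\right)$ reads the first entry of $x_i\left(t\right)$ precisely at the times $t\equiv i-1\;\left(\operatorname{mod}N\right)$, that is, at $t=i-1+mN$ for $m\in\mathbb{N}_0$. At such $t$ the observed sample equals $\langle v_m,x_i\left(0\right)\rangle$ with $v_m=R\left(-t\alpha_i\right)e_1=\left(\operatorname{cos}\left(t\alpha_i\right),-\operatorname{sin}\left(t\alpha_i\right)\right)^{\top}$. Consequently $x_i\left(0\right)$ is recoverable if and only if the family $\lbrace v_m:m\in\mathbb{N}_0\rbrace$ spans $\mathbb{R}^2$.

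The decisive step is to decide when these measurement vectors fail to span. Each $v_m$ is a unit vector at polar angle $-\left(i-1+mN\right)\alpha_i$, and two unit vectors are linearly dependent exactly when their angles differ by a multiple of $\pi$; the angles at indices $m$ and $m^{\prime}$ differ by $\left(m-m^{\prime}\right)N\alpha_i$. Therefore the entire family lies on a single line, and $x_i\left(0\right)$ cannot be reconstructed, precisely when the increment $N\alpha_i$ is resonant with $\pi$, whereas otherwise $v_0$ and $v_1$ already span $\mathbb{R}^2$ and two samples suffice. Intersecting these conditions over $i=1,\dots,N$ yields the equivalence, the \emph{only if} direction being supplied by exhibiting the component of $x_i\left(0\right)$ orthogonal to $v_0$ as unobservable whenever the resonance holds.

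The main obstacle I expect is fixing the exact modulus in this last step. In contrast to Proposition \ref{prop:rot}, where $c$ read the first coordinate at even times and the second coordinate at odd times, so that a single even/odd coincidence modulo $2\pi$ delivered both coordinates of one rotated vector, here \eqref{eq:oddvectors} only ever reads the first coordinate of each block; the two independent directions must therefore come from two distinct sampling times rather than from the two coordinate axes. Reconciling the resonance of $N\alpha_i$ obtained above with the modulus asserted in the statement, and thereby correctly separating the cases $N\alpha_i\equiv 0$ and $N\alpha_i\equiv\pi$ modulo $2\pi$, is the delicate point on which the sharpness of the \emph{if and only if} hinges.
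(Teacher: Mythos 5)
Your route is the paper's own route: the same block-by-block decoupling, the same Euclidean division $t=jN+i-1$, and the same reduction to the question of whether the sampled measurement vectors $\bigl(\cos\bigl(\left(i-1\right)\alpha_{i}+jN\alpha_{i}\bigr),\,-\sin\bigl(\left(i-1\right)\alpha_{i}+jN\alpha_{i}\bigr)\bigr)^{\top}$, $j\in\mathbb{N}_{0}$, span $\mathbb{R}^{2}$. The only divergence is the final spanning criterion, and there you are right and the paper is not.

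The ``delicate point'' you flag cannot be reconciled, because the proposition as printed is false. The paper's proof simply asserts that the vectors above span $\mathbb{R}^{2}$ unless $N\alpha_{i}$ is a multiple of $2\pi$; your angle argument shows the correct threshold is $\pi$, since two unit vectors are linearly dependent exactly when their polar angles differ by a multiple of $\pi$, and consecutive sampling angles differ by $N\alpha_{i}$. Concretely, take $N=2$, $\alpha_{1}=\pi/2$, $\alpha_{2}=1$. Then $N\alpha_{1}=\pi$ is not a multiple of $2\pi$, so the statement promises reconstruction; but block $1$ is sampled only at the even times $t=2m$, where the measurement vector is $\bigl(\cos\left(m\pi\right),-\sin\left(m\pi\right)\bigr)^{\top}=\bigl(\left(-1\right)^{m},0\bigr)^{\top}$, so every sample is $\pm$ the first coordinate of $x_{1}\left(0\right)$ and the second coordinate is unobservable. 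Your argument, carried to its natural conclusion, therefore proves the corrected equivalence --- $x$ is reconstructible if and only if no $N\alpha_{i}$ is a multiple of $\pi$ --- and refutes the ``if'' direction of the statement as printed (the printed ``only if'' direction survives, being the weaker consequence that multiples of $2\pi$ are in particular multiples of $\pi$). Note also that this is consistent with Proposition \ref{prop:rot}: there the compressor \eqref{eq:standardbasis} alternates between reading the first and second coordinates, so the two directions can come from the two coordinate axes, whereas \eqref{eq:oddvectors} only ever reads the first coordinate of each block, which is exactly why the modulus tightens from $2\pi$ to $\pi$. In short, the obstacle you anticipated is an error in the paper's proof (the unjustified spanning claim), not a gap in yours.
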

\begin{proof}
Consider some fixed $t$. Euclidean division by $N$ yields $t=jN+i-1$ where $j$ is the quotient and $i-1=t\operatorname{mod}N$ is the remainder. With this notation and $c$ being the $N$-periodic sequence of standard basis vectors \eqref{eq:oddvectors}, we find that 
\begin{align}
y\left(t\right)=\left\langle\begin{pmatrix}\hphantom{-}\operatorname{cos}\left(\alpha_{i}t\right) \\ -\operatorname{sin}\left(\alpha_{i}t\right)\end{pmatrix},x_{i}\left(0\right)\right\rangle.  
\end{align}
In order to proceed, it remains to notice that the vectors 
\begin{equation}
\begin{pmatrix}\hphantom{-}\operatorname{cos}\left(\left(i-1\right)\alpha_{i}+jN\alpha_{i}\right) \\ -\operatorname{sin}\left(\left(i-1\right)\alpha_{i}+jN\alpha_{i}\right)\end{pmatrix},\;\;\;j\in\mathbb{N}_{0}
\end{equation}
span $\mathbb{R}^{2}$ unless $N\alpha_{i}$ is a multiple of $2\pi$. Thus, if $N\alpha_{i}$ is no multiple of $2\pi$, then $x_{i}\left(0\right)$ can be uniquely reconstructed from $y$. All other $x_{i}\left(t\right)$ are reconstructed by counter-clockwise rotation through $\alpha_{i}$. If no $N\alpha_{i}$ is a multiple of $2\pi$, then all $x_{i}$ can be reconstructed in this fashion.
\end{proof}

Our previous finding can be cast more generically for the case that the signals $x_{i}$ are produced by $x_{i}\left(t+1\right)=A_{i}x\left(t\right)$, $A_{i}$ some real $n\times n$ matrix, $i$ ranging from $1$ to $N$, and the measurements $y_{i}\left(t\right)=C_{i}x_{i}\left(t\right)$ are taken but switched through periodically due to round-robin scheduling, again as depicted in Fig.\ \ref{fig:rr}. In that case, for being able to reconstruct all $x_{i}$ despite round-robin scheduling, the matrices
\begin{equation}
 \begin{pmatrix}C_{i} \\ C_{i}A_{i}^{N} \\ C_{i}A_{i}^{2N} \\ \vdots \\ C_{i}A_{i}^{\left(n-1\right)N}\end{pmatrix}
\end{equation}
should have rank $n$ for all $i$. This is just the Kalman observability criterion for reconstructing $x_{i}$ from $y_{i}$ with $x_{i}$ evolving under $x_{i}\left(t+1\right)=A^{N}x_{i}\left(t\right)$ and $y_{i}\left(t\right)=C_{i}x_{i}$. 

\subsection{Exosystems Evolving in Continuous Time}
In the previous subsection, we had our signals produced by exosystems, but we assumed that those exosystems, and hence also our signals, evolve in discrete time. In the present subsection, we, instead, let our exosystems and signals evolve in continuous time. For this purpose, let the signals $x$ and $c$ be produced by the differential equations
\begin{align}
\dot{x}\left(t\right)&=A\hspace{1pt}x\left(t\right),\;\;\;x\left(0\right)=x_{0}, \label{eq:ODE} \\
\hspace{.75pt}\dot{c}\hspace{.75pt}\left(t\right)&=\hspace{.25pt}S\hspace{1pt}\hspace{.75pt}c\hspace{.25pt}\hspace{.75pt}\left(t\right),\;\;\;\hspace{.7pt}c\hspace{.7pt}\left(0\right)=\hspace{.15pt}c\hspace{1.25pt}_{0}, \label{eq:contcompressor}
\end{align}
respectively, with both $A$ and $S$ being real $n\times n$ matrices. Since we are interested in compression of periodic signals, further let $A$ be skew-symmetric, i.e., a member of the Lie algebra $\mathfrak{so}\left(n\right)$. A lossless compressor \eqref{eq:contcompressor} will allow us to reconstruct $x$ from the compressed signal \eqref{eq:compression}. Schematically, this compression scheme is depicted in Fig.\ \ref{fig:conttime}.

\begin{figure}
\centering
\begin{tikzpicture}
 \draw (0,0) rectangle (2,1) node [pos=.5] {$\dot{x}=A\hspace{1pt}x$};
 \draw (0,-1.5) rectangle (2,-.5) node [pos=.5] {$\dot{c}\hspace{.75pt}=\hspace{.25pt}S\hspace{.75pt}c$};
 \draw (3,-1.5) rectangle (4,1) node [pos=.5] {$\langle\cdot,\cdot\rangle$};
 \draw[->] (2,.5) -- (3,.5);
 \draw[->] (2,-1) -- (3,-1);
 \draw[->] (4,-.25) -- (5,-.25) node [anchor=west] {$y$};
\end{tikzpicture}
 \caption{The signal $x$ produced by the differential equation $\dot{x}=Ax$ is compressed by the signal $c$ produced by the differential equation $\dot{c}=Sc$.}\label{fig:conttime}
\end{figure}
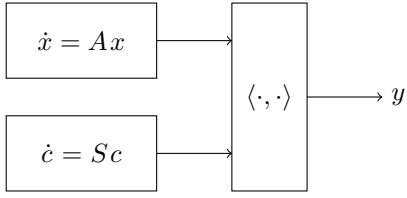

Our next result states that, for every skew-symmetric $A$, there exists a lossless compressor, uniquely defined by the matrix $S$ and the vector $c_{0}$. Further, that $S$ will be skew-symmetric as well. Thereby, we make use of the Cartan subalgebra, call it $\mathfrak{h}$, of $\mathfrak{so}\left(n\right)$, which is an abelian subspace of $\mathfrak{so}\left(n\right)$ consisting of block-diagonal matrices with $\mathfrak{so}\left(2\right)$ blocks on their diagonal and another $0$ attached in odd dimensions.
\begin{theorem}
Let $c$ and $x$, respectively, evolve under \eqref{eq:ODE} and \eqref{eq:contcompressor}. Then, for every $A\in\mathfrak{so}\left(n\right)$, there exist $S\in\mathfrak{so}\left(n\right)$ and $c_{0}\in\mathbb{R}^{n}$ such that $x$ can be uniquely reconstructed from $y$.
\end{theorem}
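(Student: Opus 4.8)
The plan is to exploit that a skew-symmetric $A$ generates an \emph{orthogonal} (hence isometric) flow $x(t)=e^{At}x_0$, and to choose the compressor so that $y$ becomes a trigonometric polynomial whose harmonics separately encode $x_0$. First I would bring $A$ into normal form: by the real spectral theorem there are an orthogonal $Q$ and a $D\in\mathfrak{h}$, block-diagonal with $2\times2$ blocks of angular frequencies $\omega_1,\dots,\omega_r$ (plus a single $0$ when $n$ is odd), such that $A=QDQ^\top$. Since the inner product $\langle c(t),x(t)\rangle$ is invariant under the orthogonal change of coordinates $x\mapsto Q^\top x$, $c\mapsto Q^\top c$, and since we are free to pick $S$ and $c_0$, I would look for $S=QD_SQ^\top$ with $D_S\in\mathfrak{h}$ and $c_0=Q\tilde c_0$. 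This reduces the problem to the case $A=D\in\mathfrak{h}$ and automatically makes $S$ skew-symmetric, as the theorem requires.

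In these coordinates both flows are block-diagonal with aligned planar rotation blocks, $x^{(k)}(t)=R(\omega_k t)x_0^{(k)}$ and $c^{(k)}(t)=R(\nu_k t)c_0^{(k)}$, where $R(\theta)$ is a rotation and $\nu_k$ are the freely choosable frequencies of $S$. The key computation is that, block by block, $\langle c^{(k)}(t),x^{(k)}(t)\rangle=\langle c_0^{(k)},R(\mu_k t)x_0^{(k)}\rangle$ with the beat frequency $\mu_k:=\omega_k-\nu_k$, because $R(-\nu_k t)R(\omega_k t)=R(\mu_k t)$. Writing $c_0^{(k)}=(a_k,b_k)$, this contribution expands as $\cos(\mu_k t)\,P_k+\sin(\mu_k t)\,Q_k$, where $(P_k,Q_k)$ is obtained from $x_0^{(k)}$ by a fixed $2\times2$ matrix of determinant $-(a_k^2+b_k^2)$. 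Summing over blocks, and adding the constant term $c_n(0)x_n(0)$ arising from the extra zero in odd dimension, exhibits $y$ as a trigonometric polynomial with frequencies $\{\mu_1,\dots,\mu_r\}$ and, possibly, $0$.

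The decisive design step, and the place where the paper's recurring non-resonance idea reappears, is the choice of the $\nu_k$. I would simply force the beat frequencies to be distinct and positive, e.g.\ $\mu_k=k$, i.e.\ $\nu_k=\omega_k-k$, which is always possible because the $\nu_k$ range over all of $\mathbb{R}$. With all $\mu_k$ distinct and positive, the functions $1,\cos(\mu_k t),\sin(\mu_k t)$ are linearly independent, so the representation of $y$ as the above trigonometric polynomial is unique and its coefficients $P_k$, $Q_k$ together with the constant term are determined by $y$ (on any nondegenerate time interval, since $y$ is real-analytic). Choosing $c_0$ so that every $c_0^{(k)}\neq0$ and $c_n(0)\neq0$ then renders each block map invertible, so $x_0^{(k)}$ is recovered from $(P_k,Q_k)$ and $x_n(0)$ from the constant term; this yields $x_0$ and hence $x(t)=e^{At}x_0$ for all $t$.

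I do not expect a genuine obstacle: the statement is an existence result and the construction above is explicit, so the real content lies in recognizing that one should place $S$ in the \emph{same} Cartan subalgebra as $A$, whereby the scalar signal demodulates cleanly into beat frequencies. The two points requiring care are (i) the harmonic-analysis fact that sinusoids at distinct frequencies are linearly independent, which legitimizes reading off $P_k,Q_k$ from $y$, and (ii) the degenerate cases, namely blocks with $\omega_k=0$ and the extra coordinate when $n$ is odd, which are dispatched by ensuring that no beat frequency vanishes (so the constant term never collides with a harmonic) and that the relevant entries of $c_0$ are nonzero.
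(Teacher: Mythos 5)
Your proposal is correct and follows essentially the same route as the paper's proof: orthogonal conjugation of $A$ into the Cartan subalgebra $\mathfrak{h}$, choice of $S$ in the same conjugated Cartan frame so that $A$ and $S$ commute, the non-resonance condition that the beat frequencies $\omega_k-\nu_k$ be pairwise distinct and nonzero, and a $c_0$ with nonzero component in every block. The only difference is one of explicitness: where the paper simply asserts that the vectors $e^{(S-A)t}c_0$, $t\geq 0$, span $\mathbb{R}^{n}$, you verify this by expanding $y$ as a trigonometric polynomial and invoking linear independence of sinusoids at distinct frequencies, which makes the reconstruction step concrete but is not a different argument.
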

\begin{proof}
 For every skew-symmetric $A$, there is a conjugation $T\tilde{A}=\hspace{-2pt}AT$ bringing $A$ to the Cartan subalgebra $\mathfrak{h}$ of $\mathfrak{so}\left(n\right)$. Denote the $\mathfrak{so}\left(2\right)$ matrices on the diagonal of $\tilde{A}$ by $\tilde{A}_{i}$. Those matrices $\tilde{A}_{i}$ will be of the form  $\tilde{A}_{i}=\omega_{i}J$ with $J=\left(\begin{smallmatrix}\hphantom{-}0 & 1 \\ -1 & 0\end{smallmatrix}\right)$ denoting the generator of $\mathfrak{so}\left(2\right)$. Now choose real scalars $\theta_{i}$ such that all differences $\omega_{i}-\theta_{i}$ are pairwise distinct and nonzero and compose a matrix $\tilde{S}$ from $\mathfrak{h}$ by letting $\theta_{i}J$ being its $i$th diagonal block, with a zero attached in odd dimensions. Then choose $S$ via the conjugation $ST=T\tilde{S}$. Further, let 
\begin{equation}
 c_{0}=\begin{cases}T\left(e_{1}+e_{3}+\dots+e_{n-1}\right) & n\text{ even} \\
                    T\left(e_{1}+e_{3}+\dots+e_{n}\right) & n\text{ odd} \\
       \end{cases}
\end{equation}
such that the vectors $e^{\left(S-A\right)t}c_{0}$, $t\geq0$, span $\mathbb{R}^{n}$. Since $\tilde{A}$, $\tilde{S}$ commute and $A$, $S$ were brought to $\mathfrak{h}$ via the same conjugation, the matrices $A$ and $S$ commute as well. Thus, $y$ reads
 \begin{equation}
  y\left(t\right)=\langle e^{St}c_{0},e^{At}x_{0}\rangle%=\langle e^{-At}e^{St}c_{0},x_{0}\rangle
  =\langle e^{\left(S-A\right)t}c_{0},x_{0}\rangle
 \end{equation}
and we therefore conclude that $x_{0}$ can be uniquely reconstructed from $y$. All other $x\left(t\right)$ are reconstructed by left-multiplication with $e^{At}$.\end{proof}

The proof proceeded as follows: First, $S$ was chosen such that it commutes with $A$. Second, $c_{0}$ was chosen such that $e^{(S-A)t}c_{0}$, $t\geq0$, spans $\mathbb{R}^{n}$. Both constructions were realized through matrices, $\tilde{A}$ and $\tilde{S}$, from the Cartan subalgebra $\mathfrak{h}$. 

For instance, as a special case, $S$ could be the zero matrix when $x$ can be reconstructed from compression by a constant signal $c\left(t\right)=c_{0}$ (which is the case if all $\omega_{i}$ are pairwise distinct and nonzero). In situations where this is impossible, some nonzero and sufficiently rich $S$ will propel $c_{0}$ through $\mathbb{R}^{n}$, yielding a periodic compression signal $c$ that provides further directions of the vectors $e^{(S-A)t}c_{0}$, $t\geq 0$, which the signal $e^{-At}c_{0}$ alone did not attain.

As such, our result is constructive; given a skew-symmetric matrix $A$, our proof allows to design a periodic compressor \eqref{eq:contcompressor}. One first brings $A$ to $\tilde{A}\in\mathfrak{h}$ via conjugation with some matrix $T\in\operatorname{SO}\left(n\right)$ and then chooses real scalars $\theta_{i}$ such that all $\omega_{i}-\theta_{i}$ are pairwise distinct and nonzero, i.e.
\begin{equation}
 \omega_{i}-\theta_{i}\neq\omega_{j}-\theta_{j}\;\;\;\text{ and }\;\;\;\omega_{i}\neq\theta_{i},
\end{equation}
with $\omega_{i}J$ denoting the diagonal blocks of $\tilde{A}$. A periodic compressor is then constructed by arranging the blocks $\theta_{i}J$ along the diagonal of some matrix, $\tilde{S}$, lying in $\mathfrak{h}$, and then applying the inverse of the transformation which brought $A$ to $\mathfrak{h}$, yielding $S$. Similarly, $c_{0}$ is obtained by transforming the sum of all odd-indexed standard basis vectors via that transformation. 

\begin{example}
 Consider $n=7$ signals evolving under the differential equation \eqref{eq:ODE} where $A$ is taken to be the skew-symmetric matrix
\begin{equation}
A=\left(\begin{array}{rr|rr|rr|r}
& & 1 & 0 & & & \\ & &0 & 1  & & &\\ \hline -1 & 0 & &  & &\\ 0 & -1 & & & & \\ \hline & & & & 0 & 0\\& & & & 0 & 0\\ \hline & & & & & & 0
\end{array}\right)
\end{equation}
which is brought to $\mathfrak{h}$ via cyclic permutation of the row / column indices $2$ and $3$. The resulting matrix $\tilde{A}$ has the matrices $\omega_{1}J$, $\omega_{2}J$, $\omega_{3}J$ on its diagonal, with another $0$ appended since $n=7$ is odd. Here, $\omega_{1}=\omega_{2}=1$ and $\omega_{3}=0$. Thus, no constant signal $c$ would allow to reconstruct $x$ (either, two identical $\omega_{i}$ or one zero $\omega_{i}$ suffices for this), thus necessitating to choose $S$ nonzero and sufficiently rich. More specifically, we must choose real scalars $\theta_{i}$ rendering all $\omega_{i}-\theta_{i}$ pairwise distinct and nonzero, which is, e.g., the case for 
\begin{equation}
\theta_{1}=2,\;\;\;\;\;\theta_{2}=3,\;\;\;\;\;\theta_{3}=4. 
\end{equation}
Hence $\tilde{S}$ is constructed by arranging the matrices $2J$, $3J$, and $4J$ along its diagonal and appending an additional $0$. The matrix $S$ is then obtained by permuting the row / column indices $2$ and $3$. Similarly, $c_{0}$ should be chosen to be $e_{1}+e_{3}+e_{5}+e_{7}$, but with the order of the entries $2$ and $3$ exchanged, yielding 
\begin{equation}
 c_{0}=e_{1}+e_{2}+e_{5}+e_{7}.
\end{equation}
As a consequence, the vectors $e^{-At}e^{St}c_{0}=e^{(S-A)t}c_{0}$, $t\geq 0$, span $\mathbb{R}^{7}$, hence allowing to reconstruct $x$ from $y$.
\end{example}
%\addtolength{\textheight}{-17cm}
%\section{Signal Compression \\ in Feedback Interconnections}

 \begin{figure}[ht]
 \centering

 \begin{tikzpicture}[scale=.525]    
 \draw[white,line width=1.5] (4*3/5-.05,0*3/5-.05) rectangle (5*3/5+.05,1*3/5+.05); 
 \draw (1.5,3.5) -- (1.5,3.5) node {$t=0$};
          \draw[fill=gray!25!white] (0,0) rectangle (5*3/5,5*3/5);
        \draw[fill=black] (0*3/5,4*3/5) rectangle (1*3/5,5*3/5);
      \draw[fill=white] (1*3/5,4*3/5) rectangle (5*3/5,5*3/5);
 \draw (3/5,0) -- (3/5,3);
 \draw (2*3/5,0) -- (2*3/5,3);
 \draw (3*3/5,0) -- (3*3/5,3);
 \draw (4*3/5,0) -- (4*3/5,3);
 \draw (0,3/5) -- (3,3/5);
 \draw (0,2*3/5) -- (3,2*3/5);
 \draw (0,3*3/5) -- (3,3*3/5);
 \draw (0,4*3/5) -- (3,4*3/5);
 \draw[orange,line width=1.5] (0*3/5-.05,4*3/5-.05) rectangle (1*3/5+.05,5*3/5+.05); 
     \end{tikzpicture} 
  \begin{tikzpicture}[scale=.525]
     \draw (1.5,3.5) -- (1.5,3.5) node {$t=4$};
          \draw[fill=gray!25!white] (0,0) rectangle (5*3/5,5*3/5);
        \draw[fill=black] (0*3/5,4*3/5) rectangle (1*3/5,5*3/5);
      \draw[fill=white] (1*3/5,4*3/5) rectangle (5*3/5,5*3/5);
            \draw[fill=white] (0*3/5,3*3/5) rectangle (5*3/5,4*3/5);
                        \draw[fill=white] (0*3/5,2*3/5) rectangle (5*3/5,3*3/5);
        \draw[fill=black] (2*3/5,2*3/5) rectangle (3*3/5,3*3/5);
                                \draw[fill=white] (0*3/5,1*3/5) rectangle (5*3/5,2*3/5);
\draw[fill=black] (2*3/5,1*3/5) rectangle (3*3/5,2*3/5);
                                \draw[fill=white] (0*3/5,0*3/5) rectangle (5*3/5,1*3/5);
\draw[fill=black] (4*3/5,0*3/5) rectangle (5*3/5,1*3/5);
 \draw (3/5,0) -- (3/5,3);
 \draw (2*3/5,0) -- (2*3/5,3);
 \draw (3*3/5,0) -- (3*3/5,3);
 \draw (4*3/5,0) -- (4*3/5,3);
 \draw (0,3/5) -- (3,3/5);
 \draw (0,2*3/5) -- (3,2*3/5);
 \draw (0,3*3/5) -- (3,3*3/5);
 \draw (0,4*3/5) -- (3,4*3/5);
  \draw[orange,line width=1.5] (4*3/5-.05,0*3/5-.05) rectangle (5*3/5+.05,1*3/5+.05); 
     \end{tikzpicture}  \begin{tikzpicture}[scale=.525]  \draw[white,line width=1.5] (4*3/5-.05,0*3/5-.05) rectangle (5*3/5+.05,1*3/5+.05);    \draw (1.5,3.5) -- (1.5,3.5) node  {$t=8$};
          \draw[fill=gray!25!white] (0,0) rectangle (5*3/5,5*3/5);
     %   \draw[fill=black] (0*3/5,4*3/5) rectangle (1*3/5,5*3/5);
      \draw[fill=white] (0*3/5,4*3/5) rectangle (5*3/5,5*3/5);
            \draw[fill=white] (0*3/5,3*3/5) rectangle (5*3/5,4*3/5);
              \draw[fill=black] (3*3/5,3*3/5) rectangle (4*3/5,4*3/5);
                        \draw[fill=white] (0*3/5,2*3/5) rectangle (5*3/5,3*3/5);
        \draw[fill=black] (2*3/5,2*3/5) rectangle (3*3/5,3*3/5);
                                \draw[fill=white] (0*3/5,1*3/5) rectangle (5*3/5,2*3/5);
\draw[fill=black] (3*3/5,1*3/5) rectangle (4*3/5,2*3/5);
 \draw (3/5,0) -- (3/5,3);
 \draw (2*3/5,0) -- (2*3/5,3);
 \draw (3*3/5,0) -- (3*3/5,3);
 \draw (4*3/5,0) -- (4*3/5,3);
 \draw (0,3/5) -- (3,3/5);
 \draw (0,2*3/5) -- (3,2*3/5);
 \draw (0,3*3/5) -- (3,3*3/5);
 \draw (0,4*3/5) -- (3,4*3/5);
  \draw[orange,line width=1.5] (3*3/5-.05,1*3/5-.05) rectangle (4*3/5+.05,2*3/5+.05);
     \end{tikzpicture}
     \begin{tikzpicture}[scale=.525]
\draw[white,line width=1.5] (4*3/5-.05,0*3/5-.05) rectangle (5*3/5+.05,1*3/5+.05);       \draw (1.5,3.5) -- (1.5,3.5) node  {$t=12$};
          \draw[fill=gray!25!white] (0,0) rectangle (5*3/5,5*3/5);
        \draw[fill=black] (4*3/5,4*3/5) rectangle (5*3/5,5*3/5);
      \draw[fill=white] (0*3/5,4*3/5) rectangle (5*3/5,5*3/5);
            \draw[fill=white] (0*3/5,3*3/5) rectangle (5*3/5,4*3/5);
              \draw[fill=black] (2*3/5,3*3/5) rectangle (3*3/5,4*3/5);
                        \draw[fill=white] (0*3/5,2*3/5) rectangle (5*3/5,3*3/5);
        \draw[fill=black] (2*3/5,2*3/5) rectangle (3*3/5,3*3/5);
                                %\draw[fill=white] (0*3/5,1*3/5) rectangle (5*3/5,2*3/5);
%\draw[fill=black] (3*3/5,1*3/5) rectangle (4*3/5,2*3/5);
 \draw (3/5,0) -- (3/5,3);
 \draw (2*3/5,0) -- (2*3/5,3);
 \draw (3*3/5,0) -- (3*3/5,3);
 \draw (4*3/5,0) -- (4*3/5,3);
 \draw (0,3/5) -- (3,3/5);
 \draw (0,2*3/5) -- (3,2*3/5);
 \draw (0,3*3/5) -- (3,3*3/5);
 \draw (0,4*3/5) -- (3,4*3/5);
   \draw[orange,line width=1.5] (2*3/5-.05,2*3/5-.05) rectangle (3*3/5+.05,3*3/5+.05);
     \end{tikzpicture}
     \begin{tikzpicture}[scale=.525]   \draw[white,line width=1.5] (4*3/5-.05,0*3/5-.05) rectangle (5*3/5+.05,1*3/5+.05);   \draw (1.5,3.5) -- (1.5,3.5) node {$t=16$};
          \draw[fill=gray!25!white] (0,0) rectangle (5*3/5,5*3/5);
      \draw[fill=white] (0*3/5,4*3/5) rectangle (5*3/5,5*3/5);
           \draw[fill=black] (2*3/5,4*3/5) rectangle (3*3/5,5*3/5);
            \draw[fill=white] (0*3/5,3*3/5) rectangle (5*3/5,4*3/5);
              \draw[fill=black] (1*3/5,3*3/5) rectangle (2*3/5,4*3/5);
                        %\draw[fill=white] (0*3/5,2*3/5) rectangle (5*3/5,3*3/5);
        %\draw[fill=black] (2*3/5,2*3/5) rectangle (3*3/5,3*3/5);
                                %\draw[fill=white] (0*3/5,1*3/5) rectangle (5*3/5,2*3/5);
%\draw[fill=black] (3*3/5,1*3/5) rectangle (4*3/5,2*3/5);
 \draw (3/5,0) -- (3/5,3);
 \draw (2*3/5,0) -- (2*3/5,3);
 \draw (3*3/5,0) -- (3*3/5,3);
 \draw (4*3/5,0) -- (4*3/5,3);
 \draw (0,3/5) -- (3,3/5);
 \draw (0,2*3/5) -- (3,2*3/5);
 \draw (0,3*3/5) -- (3,3*3/5);
 \draw (0,4*3/5) -- (3,4*3/5);
    \draw[orange,line width=1.5] (1*3/5-.05,3*3/5-.05) rectangle (2*3/5+.05,4*3/5+.05);
     \end{tikzpicture}
    \caption{The original image at $t=0$ can be reconstructed from the measurements at $t=0,4,8,12,16$ due to coprimeness of $p=4$ and $n=5$.}\label{fig:smearrem}\vspace{-6pt}
 \end{figure}
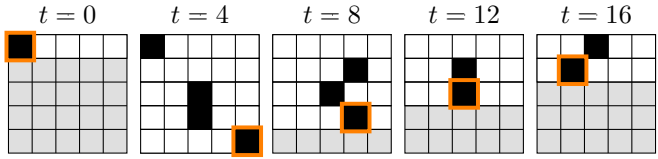

 \section{Removal of Smear}
 In this section, we return to our initial motivation of removing distortion effects caused by rolling shutter, i.e., we alleviate the effects of line-by-line readout of pixel data in digital cameras with active pixel sensors. A particular distortion effect that is regularly noticed due to rolling shutter has been coined \emph{smear} and occurs when rotating objects are photographed with a line-by-line pixel readout. We study the application of our techniques to the removal of smear in an extended example. To this end, consider the $5$-pixel rotor rotating in the $25$-pixel image depicted in the left column of Fig.\ \ref{fig:smear}. In the right column, we see a photography of the rotor subject to line-by-line readout of pixel data. When the entire image is read out at $t=4$, we find that the rotor is, indeed, distorted, and can hardly be recognized in the image. Yet, by virtue of our findings, since the evolution of pictures in the left column is $p=4$-periodic and our measurements in the right column switch through the lines of the picture $n=5$-periodically, it is possible to reconstruct the images on the left column from the images in the right column since the numbers $4$ and $5$ are coprime. We exemplify this fact by reconstructing the true picture at $t=0$. The first pixel of the rotor becomes visible at $t=0$ and at $t=4$, the fifth pixel is recorded. The remaining pixels $2$, $3$, and $4$ will become visible at the time instances $t=16$, $12$, and $8$, respectively, as it is illustrated in Fig.\ \ref{fig:smearrem}. The time instances $t=0,4,8,12,16$ which one requires for reconstructing our picture are precisely the solutions to the simultaneous congruences 
\begin{align}
 t&\equiv \hspace{.75pt}i\;\;\;\hspace{.75pt}\left(\operatorname{mod}5\right) \\ t&\equiv 0\;\;\;\left(\operatorname{mod}4\right)
\end{align} 
 with $i$ ranging from $0$ to $4$. In other words, pixel $\left(t\operatorname{mod}5\right)+1$ of the rotor becomes visible in the $t$th picture, where $t\operatorname{mod}4=0$.

\section{Discussion and Conclusion}
We proposed and studied a scheme for lossless and causal compression of real vector-valued periodic signals to real scalar-valued signals. In the simplest case, the compressed signal switches through the entries of the original signal in a periodic fashion. This compression was shown to be lossless whenever the periods of the periodic switch and the periodic signal are coprime, which, mathematically, amounts to a non-resonance condition on the discrete torus. This result was extended to compression with arbitrary periodic mixing signals. We presented a particular account on signals produced by exosystems, both in discrete and continuous time. Thereby, we, again, encountered non-resonance conditions for lossless compression.

%\newpage

 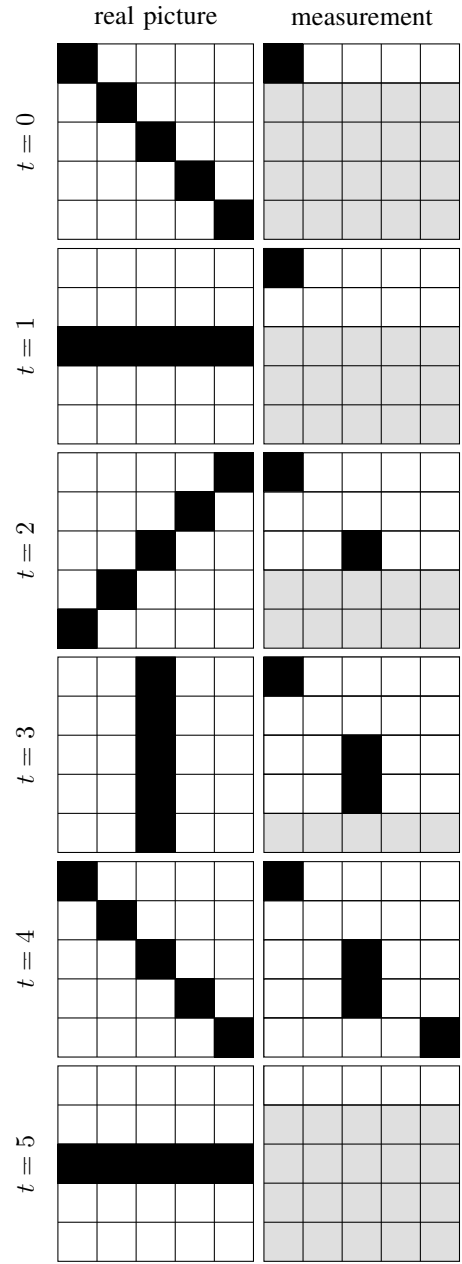
\begin{figure}[h!]
 \centering
   \begin{tikzpicture}[scale=.8666]
   \draw (1.5,3.4) -- (1.5,3.4) node {real picture};
   \draw (-.5,1.5) -- (-.5,1.5) node[rotate=90] {$t=0$};
 \draw (0,0) rectangle (3,3);
 \draw (3/5,0) -- (3/5,3);
 \draw (2*3/5,0) -- (2*3/5,3);
 \draw (3*3/5,0) -- (3*3/5,3);
 \draw (4*3/5,0) -- (4*3/5,3);
 \draw (0,3/5) -- (3,3/5);
 \draw (0,2*3/5) -- (3,2*3/5);	
 \draw (0,3*3/5) -- (3,3*3/5);
 \draw (0,4*3/5) -- (3,4*3/5);
 \filldraw (0*3/5,4*3/5) rectangle (1*3/5,5*3/5);
 \filldraw (1*3/5,3*3/5) rectangle (2*3/5,4*3/5);
 \filldraw (2*3/5,2*3/5) rectangle (3*3/5,3*3/5);
 \filldraw (3*3/5,1*3/5) rectangle (4*3/5,2*3/5);
 \filldraw (4*3/5,0*3/5) rectangle (5*3/5,1*3/5);
     \end{tikzpicture}   \begin{tikzpicture}[scale=.8666]
  \draw (1.5,3.4) -- (1.5,3.4) node {measurement};
          \draw[fill=gray!25!white] (0,0) rectangle (5*3/5,5*3/5);
        \draw[fill=black] (0*3/5,4*3/5) rectangle (1*3/5,5*3/5);
      \draw[fill=white] (1*3/5,4*3/5) rectangle (5*3/5,5*3/5);
 \draw (3/5,0) -- (3/5,3);
 \draw (2*3/5,0) -- (2*3/5,3);
 \draw (3*3/5,0) -- (3*3/5,3);
 \draw (4*3/5,0) -- (4*3/5,3);
 \draw (0,3/5) -- (3,3/5);
 \draw (0,2*3/5) -- (3,2*3/5);
 \draw (0,3*3/5) -- (3,3*3/5);
 \draw (0,4*3/5) -- (3,4*3/5);
     \end{tikzpicture} 
    
    \vspace{.3em}
    
     \begin{tikzpicture}[scale=.8666]
   \draw (-.5,1.5) -- (-.5,1.5) node[rotate=90] {$t=1$};
 \draw (0,0) rectangle (3,3);
 \draw (3/5,0) -- (3/5,3);
 \draw (2*3/5,0) -- (2*3/5,3);
 \draw (3*3/5,0) -- (3*3/5,3);
 \draw (4*3/5,0) -- (4*3/5,3);
 \draw (0,3/5) -- (3,3/5);
 \draw (0,2*3/5) -- (3,2*3/5);
 \draw (0,3*3/5) -- (3,3*3/5);
 \draw (0,4*3/5) -- (3,4*3/5);
 \filldraw (0*3/5,2*3/5) rectangle (1*3/5,3*3/5);
 \filldraw (1*3/5,2*3/5) rectangle (2*3/5,3*3/5);
 \filldraw (2*3/5,2*3/5) rectangle (3*3/5,3*3/5);
 \filldraw (3*3/5,2*3/5) rectangle (4*3/5,3*3/5);
 \filldraw (4*3/5,2*3/5) rectangle (5*3/5,3*3/5);
     \end{tikzpicture}  \begin{tikzpicture}[scale=.8666]
          \draw[fill=gray!25!white] (0,0) rectangle (5*3/5,5*3/5);
        \draw[fill=black] (0*3/5,4*3/5) rectangle (1*3/5,5*3/5);
      \draw[fill=white] (1*3/5,4*3/5) rectangle (5*3/5,5*3/5);
            \draw[fill=white] (0*3/5,3*3/5) rectangle (5*3/5,4*3/5);
 \draw (3/5,0) -- (3/5,3);
 \draw (2*3/5,0) -- (2*3/5,3);
 \draw (3*3/5,0) -- (3*3/5,3);
 \draw (4*3/5,0) -- (4*3/5,3);
 \draw (0,3/5) -- (3,3/5);
 \draw (0,2*3/5) -- (3,2*3/5);
 \draw (0,3*3/5) -- (3,3*3/5);
 \draw (0,4*3/5) -- (3,4*3/5);
     \end{tikzpicture}
 
        \vspace{.3em}

     \begin{tikzpicture}[scale=.8666]
 \draw (-.5,1.5) -- (-.5,1.5) node[rotate=90] {$t=2$};
\draw (0,0) rectangle (3,3);
\draw (3/5,0) -- (3/5,3);
\draw (2*3/5,0) -- (2*3/5,3);
\draw (3*3/5,0) -- (3*3/5,3);
\draw (4*3/5,0) -- (4*3/5,3);
\draw (0,3/5) -- (3,3/5);
\draw (0,2*3/5) -- (3,2*3/5);
\draw (0,3*3/5) -- (3,3*3/5);
\draw (0,4*3/5) -- (3,4*3/5);
\filldraw (0*3/5,0*3/5) rectangle (1*3/5,1*3/5);
\filldraw (1*3/5,1*3/5) rectangle (2*3/5,2*3/5);
\filldraw (2*3/5,2*3/5) rectangle (3*3/5,3*3/5);
\filldraw (3*3/5,3*3/5) rectangle (4*3/5,4*3/5);
\filldraw (4*3/5,4*3/5) rectangle (5*3/5,5*3/5);
    \end{tikzpicture}   \begin{tikzpicture}[scale=.8666]
         \draw[fill=gray!25!white] (0,0) rectangle (5*3/5,5*3/5);
       \draw[fill=black] (0*3/5,4*3/5) rectangle (1*3/5,5*3/5);
     \draw[fill=white] (1*3/5,4*3/5) rectangle (5*3/5,5*3/5);
           \draw[fill=white] (0*3/5,3*3/5) rectangle (5*3/5,4*3/5);
                       \draw[fill=white] (0*3/5,2*3/5) rectangle (5*3/5,3*3/5);
       \draw[fill=black] (2*3/5,2*3/5) rectangle (3*3/5,3*3/5);
\draw (3/5,0) -- (3/5,3);
\draw (2*3/5,0) -- (2*3/5,3);
\draw (3*3/5,0) -- (3*3/5,3);
\draw (4*3/5,0) -- (4*3/5,3);
\draw (0,3/5) -- (3,3/5);
\draw (0,2*3/5) -- (3,2*3/5);
\draw (0,3*3/5) -- (3,3*3/5);
\draw (0,4*3/5) -- (3,4*3/5);
    \end{tikzpicture}
   
    \vspace{.3em}

     \begin{tikzpicture}[scale=.8666]
  \draw (-.5,1.5) -- (-.5,1.5) node[rotate=90] {$t=3$};
\draw (0,0) rectangle (3,3);
\draw (3/5,0) -- (3/5,3);
\draw (2*3/5,0) -- (2*3/5,3);
\draw (3*3/5,0) -- (3*3/5,3);
\draw (4*3/5,0) -- (4*3/5,3);
\draw (0,3/5) -- (3,3/5);
\draw (0,2*3/5) -- (3,2*3/5);
\draw (0,3*3/5) -- (3,3*3/5);
\draw (0,4*3/5) -- (3,4*3/5);
\filldraw (2*3/5,0*3/5) rectangle (3*3/5,1*3/5);
\filldraw (2*3/5,1*3/5) rectangle (3*3/5,2*3/5);
\filldraw (2*3/5,2*3/5) rectangle (3*3/5,3*3/5);
\filldraw (2*3/5,3*3/5) rectangle (3*3/5,4*3/5);
\filldraw (2*3/5,4*3/5) rectangle (3*3/5,5*3/5);
    \end{tikzpicture}   \begin{tikzpicture}[scale=.8666]
         \draw[fill=gray!25!white] (0,0) rectangle (5*3/5,5*3/5);
       \draw[fill=black] (0*3/5,4*3/5) rectangle (1*3/5,5*3/5);
     \draw[fill=white] (1*3/5,4*3/5) rectangle (5*3/5,5*3/5);
           \draw[fill=white] (0*3/5,3*3/5) rectangle (5*3/5,4*3/5);
                       \draw[fill=white] (0*3/5,2*3/5) rectangle (5*3/5,3*3/5);
       \draw[fill=black] (2*3/5,2*3/5) rectangle (3*3/5,3*3/5);
                               \draw[fill=white] (0*3/5,1*3/5) rectangle (5*3/5,2*3/5);
\draw[fill=black] (2*3/5,1*3/5) rectangle (3*3/5,2*3/5);
\draw (3/5,0) -- (3/5,3);
\draw (2*3/5,0) -- (2*3/5,3);
\draw (3*3/5,0) -- (3*3/5,3);
\draw (4*3/5,0) -- (4*3/5,3);
\draw (0,3/5) -- (3,3/5);
\draw (0,2*3/5) -- (3,2*3/5);
\draw (0,3*3/5) -- (3,3*3/5);
\draw (0,4*3/5) -- (3,4*3/5);
    \end{tikzpicture}

    \vspace{.3em}

      \begin{tikzpicture}[scale=.8666]
   \draw (-.5,1.5) -- (-.5,1.5) node[rotate=90] {$t=4$};
 \draw (0,0) rectangle (3,3);
 \draw (3/5,0) -- (3/5,3);
 \draw (2*3/5,0) -- (2*3/5,3);
 \draw (3*3/5,0) -- (3*3/5,3);
 \draw (4*3/5,0) -- (4*3/5,3);
 \draw (0,3/5) -- (3,3/5);
 \draw (0,2*3/5) -- (3,2*3/5);
 \draw (0,3*3/5) -- (3,3*3/5);
 \draw (0,4*3/5) -- (3,4*3/5);
 \filldraw (0*3/5,4*3/5) rectangle (1*3/5,5*3/5);
 \filldraw (1*3/5,3*3/5) rectangle (2*3/5,4*3/5);
 \filldraw (2*3/5,2*3/5) rectangle (3*3/5,3*3/5);
 \filldraw (3*3/5,1*3/5) rectangle (4*3/5,2*3/5);
 \filldraw (4*3/5,0*3/5) rectangle (5*3/5,1*3/5);    \end{tikzpicture}   \begin{tikzpicture}[scale=.8666]
          \draw[fill=gray!25!white] (0,0) rectangle (5*3/5,5*3/5);
        \draw[fill=black] (0*3/5,4*3/5) rectangle (1*3/5,5*3/5);
      \draw[fill=white] (1*3/5,4*3/5) rectangle (5*3/5,5*3/5);
            \draw[fill=white] (0*3/5,3*3/5) rectangle (5*3/5,4*3/5);
                        \draw[fill=white] (0*3/5,2*3/5) rectangle (5*3/5,3*3/5);
        \draw[fill=black] (2*3/5,2*3/5) rectangle (3*3/5,3*3/5);
                                \draw[fill=white] (0*3/5,1*3/5) rectangle (5*3/5,2*3/5);
\draw[fill=black] (2*3/5,1*3/5) rectangle (3*3/5,2*3/5);
                                \draw[fill=white] (0*3/5,0*3/5) rectangle (5*3/5,1*3/5);
\draw[fill=black] (4*3/5,0*3/5) rectangle (5*3/5,1*3/5);
 \draw (3/5,0) -- (3/5,3);
 \draw (2*3/5,0) -- (2*3/5,3);
 \draw (3*3/5,0) -- (3*3/5,3);
 \draw (4*3/5,0) -- (4*3/5,3);
 \draw (0,3/5) -- (3,3/5);
 \draw (0,2*3/5) -- (3,2*3/5);
 \draw (0,3*3/5) -- (3,3*3/5);
 \draw (0,4*3/5) -- (3,4*3/5);
     \end{tikzpicture}

     \vspace{.3em}

      \begin{tikzpicture}[scale=.8666]
   \draw (-.5,1.5) -- (-.5,1.5) node[rotate=90] {$t=5$};
 \draw (0,0) rectangle (3,3);
 \draw (3/5,0) -- (3/5,3);
 \draw (2*3/5,0) -- (2*3/5,3);
 \draw (3*3/5,0) -- (3*3/5,3);
 \draw (4*3/5,0) -- (4*3/5,3);
 \draw (0,3/5) -- (3,3/5);
 \draw (0,2*3/5) -- (3,2*3/5);
 \draw (0,3*3/5) -- (3,3*3/5);
 \draw (0,4*3/5) -- (3,4*3/5);
 \filldraw (0*3/5,2*3/5) rectangle (1*3/5,3*3/5);
 \filldraw (1*3/5,2*3/5) rectangle (2*3/5,3*3/5);
 \filldraw (2*3/5,2*3/5) rectangle (3*3/5,3*3/5);
 \filldraw (3*3/5,2*3/5) rectangle (4*3/5,3*3/5);
 \filldraw (4*3/5,2*3/5) rectangle (5*3/5,3*3/5);
    \end{tikzpicture}   \begin{tikzpicture}[scale=.8666]
          \draw[fill=gray!25!white] (0,0) rectangle (5*3/5,5*3/5);
      \draw[fill=white] (0*3/5,4*3/5) rectangle (5*3/5,5*3/5);
 \draw (3/5,0) -- (3/5,3);
 \draw (2*3/5,0) -- (2*3/5,3);
 \draw (3*3/5,0) -- (3*3/5,3);
 \draw (4*3/5,0) -- (4*3/5,3);
 \draw (0,3/5) -- (3,3/5);
 \draw (0,2*3/5) -- (3,2*3/5);
 \draw (0,3*3/5) -- (3,3*3/5);
 \draw (0,4*3/5) -- (3,4*3/5);
     \end{tikzpicture}

    \caption{A 5-pixel rotor rotates in a 25-pixel picture (left) and is thereby recorded subject to pixel-by-pixel readout (right), thus causing ``smear''.}\label{fig:smear}\vspace{-6pt}
 \end{figure}

\bibliographystyle{IEEEtran}

\begin{thebibliography}{9}  % 100 is a random guess of the total number of 
%references
\bibitem{Bitmead1984} Bitmead, R. R., ``Persistence of Excitation Conditions and the Convergence of Adaptive Schemes," \emph{IEEE Transactions on Information Theory 30(2)}, pp. 183--191, 1984.

\bibitem{Candes2006} Candes, E. J. and Tao, T., ``Near-optimal signal recovery from random projections: Universal encoding strategies?," \emph{IEEE Transactions on Information Theory 52(12)}, pp. 5406--5425, 2006.

\bibitem{Fossum1997} Fossum, E. R., ``{CMOS} image sensors: Electronic camera-on-a-chip," \emph{IEEE Transactions on Electron Devices 44(10)}, pp. 1689--1698, 1997.

\bibitem{Gersho1992} Gersho, A. and Gray, R. M., ``Vector Quantization and Signal Compression," \emph{Kluwer}, 1992.

\bibitem{Katevenis1987} Katevenis, M. G. H., ``Fast switching and fair control of congested flow in broadband networks," \emph{IEEE Journal on Selected Areas in Communications 5(8)}, pp. 1315--1326, 1987.

\bibitem{Mishali2010} Mishali, M. and Eldar, Y. C., ``From theory to practice: Sub-{N}yquist sampling of sparse wideband analog signals," \emph{IEEE Journal of Selected Topics in Signal Processing 4(2)}, pp. 375--381, 2010.
\end{thebibliography}

\end{document}